\theoremstyle{acmplain}
\newtheorem{remark}[theorem]{Remark}
\begin{document}

\title{Green Security Game with Community Engagement}  % put your title here!
%\titlenote{Produces the permission block, and copyright information}

% AAMAS: as appropriate, uncomment one subtitle line; check the CFP
%\subtitle{Extended Abstract}
%\subtitle{Blue Sky Ideas Track}
%\subtitle{JAAMAS Track}
%\subtitle{Demonstration}
%\subtitle{Doctoral Consortium}

% AAMAS: submissions are anonymous for most tracks
%\author{Paper \#1092}  % put your paper number here!

%% example of author block for camera ready version of accepted papers: don't use for anonymous submissions
%
\author{Taoan Huang}
%\authornote{Dr.~Trovato insisted his name be first.}
%\orcid{1234-5678-9012}
\affiliation{%
  \institution{University of Southern California}
  %\city{Dublin} 
  %\state{Ohio} 
  %\postcode{43017-6221}
}

\email{taoanhua@usc.edu}

\author{Weiran Shen}
\affiliation{
\institution{Carnegie Mellon University}
}
\email{emersonswr@gmail.com}

\author{David Zeng}
\affiliation{
\institution{Carnegie Mellon University}
}
\email{dzeng@andrew.cmu.edu}

\author{Tianyu Gu}
\affiliation{
\institution{Carnegie Mellon University}
}
\email{tianyug@andrew.cmu.edu
}
\author{Rohit Singh}
\affiliation{
\institution{World Wildlife Fund}
}
\email{rsingh@wwf.sg
}
\author{Fei Fang}
\affiliation{
\institution{Carnegie Mellon University}
}
\email{feif@cs.cmu.edu
}
%
%\author{G.K.M. Tobin}
%\authornote{The secretary disavows any knowledge of this author's actions.}
%\affiliation{%
%  \institution{Institute for Clarity in Documentation}
%  \streetaddress{P.O. Box 1212}
%  \city{Dublin} 
%  \state{Ohio} 
%  \postcode{43017-6221}
%}
%\email{webmaster@marysville-ohio.com}
%
%\author{Lars Th{\o}rv{\"a}ld}
%\authornote{This author is the
%  one who did all the really hard work.}
%\affiliation{%
%  \institution{The Th{\o}rv{\"a}ld Group}
%  \streetaddress{1 Th{\o}rv{\"a}ld Circle}
%  \city{Hekla} 
%  \country{Iceland}}
%\email{larst@affiliation.org}
%
%\author{Valerie B\'eranger}
%\affiliation{%
%  \institution{Inria Paris-Rocquencourt}
%  \city{Rocquencourt}
%  \country{France}
%}
%\author{Aparna Patel} 
%\affiliation{%
% \institution{Rajiv Gandhi University}
% \streetaddress{Rono-Hills}
% \city{Doimukh} 
% \state{Arunachal Pradesh}
% \country{India}}
%\author{Huifen Chan}
%\affiliation{%
%  \institution{Tsinghua University}
%  \streetaddress{30 Shuangqing Rd}
%  \city{Haidian Qu} 
%  \state{Beijing Shi}
%  \country{China}
%}
%
%\author{Charles Palmer}
%\affiliation{%
%  \institution{Palmer Research Laboratories}
%  \streetaddress{8600 Datapoint Drive}
%  \city{San Antonio}
%  \state{Texas} 
%  \postcode{78229}}
%\email{cpalmer@prl.com}
%
%\author{John Smith}
%\affiliation{\institution{The Th{\o}rv{\"a}ld Group}}
%\email{jsmith@affiliation.org}
%
%\author{Julius P.~Kumquat}
%\affiliation{\institution{The Kumquat Consortium}}
%\email{jpkumquat@consortium.net}
%
%% The example's default list of authors is too long for headers
%\renewcommand{\shortauthors}{B. Trovato et al.}

\begin{abstract}  % put your abstract here!
While game-theoretic models and algorithms have been developed to combat illegal activities, such as poaching and over-fishing, in green security domains, none of the existing work considers the crucial aspect of community engagement: community members are recruited by law enforcement as informants and can provide valuable tips, e.g., the location of ongoing illegal activities, to assist patrols. We fill this gap and (i) introduce a novel two-stage security game model for community engagement, with a bipartite graph representing the informant-attacker social network and a level-$\kappa$ response model for attackers inspired by cognitive hierarchy; (ii) provide complexity results and exact, approximate, and heuristic algorithms for selecting informants and allocating patrollers against level-$\kappa$ ($\kappa<\infty$) attackers; (iii) provide a novel algorithm to find the optimal defender strategy against level-$\infty$ attackers, which converts the problem of optimizing a parameterized fixed-point to a bi-level optimization problem, where the inner level is just a linear program, and the outer level has only a linear number of variables and a single linear constraint. We also evaluate the algorithms through extensive experiments.
\end{abstract}

\keywords{Security Game; Computational Sustainability; Community Engagement}  % put your semicolon-separated keywords here!

\maketitle

%%%%%%%%%%%%%%%%%%%%%%%%%%%%%%%%%%%%%%%%%%%%%%%%%%%%%%%%%%%%%%%%%%%%%%%%%%%%%%%%%%%%%%%%%%%%%%%%%%%%%%%%%
%% start of main body of paper

\section{Introduction}
\label{sec:introduction}
Despite the significance of protecting natural resources to environmental sustainability, a common lack of funding leads to an extremely low density of law enforcement units (referred to as defenders) to combat illegal activities such as wildlife poaching and overfishing (referred to as attacks).
% e.g., 1 patroller per 167 square kilometers in wildlife conservation areas \cite{holmern2007local}.
Due to insufficient sanctions, attackers are able to launch frequent attacks \cite{le2006economic,leader1993policies}, making it even more challenging to effectively detect and deter criminal activities through patrolling. To improve patrol efficiency, law enforcement agencies often recruit informants from local communities and plan defensive resources based on tips provided by them \cite{linkie2015editor's}. %\ff{add more citations here!}\gty{addressed}
Since attackers are often from the same local community and their activities can be observed by informants through social interactions, such tips contain detailed information about ongoing or upcoming criminal activities and, if known by defenders, can directly be used to guide allocating defensive resources. In fact, community engagement is listed by World Wild Fund for Nature as one of the \textit{six pillars towards zero poaching} \cite{zeropoaching2015b}. The importance of community engagement goes beyond these green security domains about environment conservation and extends to domains such as fighting urban crimes \cite{tublitzfitness,gill2014community}. \ff{add citation}

%Providing security is a large and important problem in many real-world situation, 
%protecting wildlife and forests from poachers,
Previous research in computational game theory have led to models and algorithms that can help the defenders allocate limited resources in the presence of attackers, with applications to enforce traffic \cite{rosenfeld2017security}, combat oil-siphoning \cite{AAAI1816312}, and deceive cyber adversaries \cite{Schlenker:2018:DCA:3237383.3237833} in addition to protecting critical infrastructure \cite{pita2008deployed} and combating wildlife crime \cite{DBLP:journals/aim/FangNPLCASSTL17}. However, none of the work has considered this essential element of community engagement.

Community engagement leads to fundamentally new challenges that do not exist in previous literature. First, the defender not only needs to determine how to patrol but also needs to decide whom to recruit as informants. Second, there can be multiple attackers, and the existence of informants makes the success or failure of their attacks interdependent since any tip about other attackers' actions can change the defender's patrol. Third, because of the combinatorial nature of the tips, representing the defender’s strategy requires exponential space, making the problem of finding optimal defender strategy extremely challenging. Fourth, attackers may notice the patrol pattern over time and adapt their strategies accordingly. 

In this paper, we provide the first study to fill the gap and provide a novel two-stage security game model for community engagement which represents the social network between potential informants and attackers with a bipartite graph. In the first stage of the game, the defender recruits a set of informants under a budget constraint, and in the second stage, the defender chooses a set of targets to protect based on tips from recruited informants. Inspired by the quantal cognitive hierarchy model~\cite{wright2014level}, we use a level-$\kappa$ response model for attackers, taking into account the fact that the attacker can make iterative reasoning and the attacker's strategy will impact the actual marginal strategy of the defender. 

%In this paper, we provide the first study to fill the gap. Our first contribution is a novel two-stage security game model that takes community engagement into account by representing the social network between potential informants and attackers using a bipartite graph. In the first stage of the game, the defender recruits a subset of informants under a budget constraint, and in the second stage, the defender chooses a set of targets to protect based on tips from recruited informants.  The major challenges in solving the game are as follows: First, there can be multiple attackers and they may attack quite frequently due to insufficient sanctions. Second, because of the combinatorial nature of the tips, representing the defender's strategy requires exponential space. Third, the attackers may become aware of the existence of the informants over time and adapt their strategies against tip-guided patrols accordingly.
%Our second contribution is an optimal linear time computable strategy to allocate defensive resources given the set of recruited informants and their tips. 
% Given tips from recruited informants, defenders can find the optimal resource allocation by computing the expected gain from protecting a target and protecting the targets with the highest gains. 
%This optimal set is not necessarily a subset or superset of the set of targets on which we get tips as it depends on the total number of attackers that conduct frequent attacks, the importance of the targets, the set of informants recruited, and how the informants are connected to those attackers.

Our second contribution includes complexity results and algorithms for computing optimal defender strategy against level-$\kappa$ ($\kappa<\infty$) attackers. We show that the problem of selecting the optimal set of informants is NP-Hard.
%, even in the level-0 attacker case
%Further, we develop an exact algorithm based on dynamic programming to find the optimal set of informants to recruit. To improve scalability, we also present an approximation algorithm and a heuristic algorithm based on sampling techniques. 
Further, based on sampling techniques, we develop an approximation algorithm to compute the optimal patrol strategy and a heuristic algorithm to find the optimal set of informants to recruit.
For an expository purpose, we mainly describe the algorithms for level-0 attackers and provide the extension to level-$\kappa$ ($0<\kappa<\infty$) attackers in the last section.

The third contribution is a novel algorithm to find the optimal defender strategy against level-$\infty$ attackers, which is an extremely challenging task: an attacker's strategy may affect the defender's marginal strategy, which in turn affects the attackers' strategies and level-$\infty$ attackers is defined through a fixed-point argument; as a result, the defender's utility relies crucially on solving a parameterized fixed-point problem.
%Based on that, we introduce level-$\infty$ attackers through a fixed-point argument.
%We first formulate the problem of finding the defender’s optimal strategy as a mathematical program, which is, however,
A na\"ive mathematical programming-based formulation is prohibitively large to solve. We instead reduce the program to a bi-level optimization problem, where both levels become more tractable. In particular, the inner level optimization is a linear program, and the outer level optimization is one with a linear number of variables and a single linear constraint.

Finally, we conduct extensive experiments. We compare the running time and solution quality of different algorithms. We show that our bi-level optimization algorithm achieves better performance than the algorithm adapted from previous works. We also compare level-0 attackers and the case with insider threat (i.e., the attacker is aware of the informants), where we formulate the problem as a mathematical program and solve it by adapting an algorithm from previous works. We show that the defender suffers from utility loss if the insider threat is not taken into consideration and the defender still assumes a na\"ive attacker model (level-0).

\section{Related Work and Background}
\label{Sec:relatedwork}

Community engagement is studied in criminology. \citet{smith2015poaching,moreto2015introducing,duffy2015militarization} investigate the role of community engagement in wildlife conservation. \citet{linkie2015editor's,gill2014community} show the positive effects of community-oriented strategies. However, they lack a mathematical model for strategic defender-attacker interactions.
% based on the network of reliable informants. However, these pieces of work do not study community engagement through the lens of computational game theory.

Recruitment of informants has also been proposed to study societal attitudes in relation to crimes using evolutionary game theory models. \citet{short2013external} formulate the problem of solving recruitment strategies as an optimal control problem to account for limited resources and budget. In contrast to their work, we emphasize the synergy of community engagement and allocation of defensive resources and aim to find the best strategy of recruiting informants and allocating defensive resources.

In security domains, Stackelberg Security Game (SSG) has been applied to a variety of security problems \cite{tambe2011security}, with variants accounting for alarm systems, surveillance cameras, and drones that can provide information in real time \cite{basilico2017coordinating,Ma_2018,guo2017comparing}. %The key aspect of our work is allowing the defender to decide which informants to recruit and take action based on tips provided by informants. 
Unlike the sensors that provide location-based information as studied in previous works, the kind of tips the informants can provide depends on their social connections, an essential feature about community engagement.

Other than the full rationality model, boundedly rational behavioral models such as quantal response (QR)~\cite{mckelvey1995quantal,yang2012computing} and subjective utility quantal response \cite{nguyen2013analyzing}\ff{add citation here} have been explored in the study of SSG. Our model and solution approach are compatible with most existing behavioral models in the SSG literature, but for an expository purpose, we only focus on the QR model.

\section{Model}
\label{sec:model}
\interfootnotelinepenalty=10000
In this section, we introduce our novel two-stage green security game with community engagement. The key addition is the consideration of informants from local communities. They can be recruited and trained by the defender to provide tips about ongoing or upcoming attacks. 
%This new element results in the two-stage model where in the first stage of the game, the defender recruits informants, and in the second stage, the defender receives tips from hired informants and allocates resources to protect targets. 

% attack or attacker's plan

Following existing works on SSG \cite{jain2010software,korzhyk2011stackelberg}, we consider a game with a set of targets $T=[n]=\{1,\dots,n\}$. The defender has $r$ units of defensive resources and each can protect or cover one target with no scheduling constraint. An attacker can choose a target to attack.
%The outcomes of the game depend only on whether each attack is successful. 
If target $i$ is attacked, the defender (attacker) receives $R^d_i>0$ ($P^a_i<0$) if it is covered, otherwise receives $P^d_i<0$ ($R^a_i>0$).

%\hta{In the anti-poaching domain, some conservation site managers utilize the so-called ``intelligence'' operations that rely on informants in nearby villages to alert rangers when they get to know the poachers’ plans in advance.} 
Informants recruited by the defender can provide tips regarding the exact targets in ongoing or upcoming attacks but tip frequency and usefulness may vary due to heterogeneity in the informants' social connections. We model the interactions and connections between potential informants $X$ (i.e., members of the community that are known to be non-attacker and can be recruited by the defender) and potential attackers $Y$ using a bipartite graph $G_S=(X,Y,E)$ with $X\cap Y=\emptyset$. Here we assume the defender has access to a list of potential attackers which could be provided by the conservation site manager, since the deployment of our work relies on the manager’s domain knowledge, experience, and understanding of the social connections among community members.
%and it can be interpreted as everyone who is not yet verified as a non-attacker. 

When an attacker decides to launch an attack, an informant who interacted with the attacker previously may know his target location.
%However, we do not assume that such information could be shared among informants. Since in the real world, informants do not know the existence of each other and may have different means of information collection. 
%correspondingly if receives tips, otherwise adopts the regular defending strategy.
% the potential informants who are connected to it, may be able to know his targeted location
% two groups of people who are the potential informants and the potential attackers, and the spread of information from  attackers  to informants 
%Formally, the potential informants and attackers are represented by two disjoint sets of people $X$ (informants) and $Y$ (attackers). 
% $X$ consists of people who will not attack targets and are potential informants, and $Y$ consists of potential attackers who will attack a target with some probability.
Formally, for each $v \in Y$, we assume that $v$ will attack a target with probability $p_v$ but the target is unknown without informants and each attacker takes actions independently. An edge $(u,v)\in E$ is associated with an information sharing intensity $w_{uv}$, representing the probability of attack activities of attacker $v$ being reported by $u$, given $v$ attacks and $u$ is recruited as an informant. %that if $u\in X$ is hired as an informant, the targeted location of $v\in Y$ will be reported by $u$ with probability $w_{uv}$ given $v$ go attacking. 
%where two disjoint sets of people $X$ and $Y$ ($X\cap Y=\emptyset$), 

%The set of potential informants $X$ consists of people who will not attack targets, and the set of potential attackers $Y$ consists of those who will attack a target with some probability.
%Formally, for each $v\in Y$, we assume that $v$ attacks a target with probability $p_v$ but the target is unknown and each of them takes action independently. 
%Let $G_S=(X,Y,E)$ be a bipartite graph that represents the social network, where  an edge $(u,v)\in E$ with an information sharing intensity $w_{uv}$  represents  that if $u\in X$ is hired as an informant, $u$ will report the targeted location of $v\in Y$ with probability $w_{uv}$ given $v$ go attacking. 

In the first stage, the defender recruits $k$ informants, and in the second stage, the defender receives tips from the informants and allocates $r$ units of defensive resources.
The defender's goal is to maximize the expected utility defined as the summation of the utilities for each attack. 

Let $U$ denote the set of recruited informants in the first stage where $|U|\leq k$, and $V=\{v\mid \exists u\in V, (u,v)\in E\}$ denote the set of attackers that are connected with at least one informant in $U$.
%We define tips as all pieces of information provided by the recruited informants $U$. 
We represent tips as a vector of disjoint subsets of attackers
$\mathbf{V}=(V_1,\ldots, V_n)$, where $V_i$ is the set of attackers who are reported to attack target $i\in T$ such that $V_i\subseteq V, V_i\cap V_j=\emptyset$ for any $i,j\in T$. 
An attacker $v$ is {\it reported} if there exists $i\in T$ such that $v\in V_i$, otherwise he is {\it unreported.} 
% the target is reported by an informant that it is a poaching location of an attacker, 
We also denote by $V_0=\bigcup_{i\in T}V_i$ the set of reported attackers. 
It is possible that $V_0=\emptyset$ and we say the defender is {\it informed} if $V_0\neq \emptyset$. Note that $\mathbf{V}$ is a compact representation of the tips received by the defender as it neglects the identity of the informants, which is not crucial in the defender's decision making given that all the tips are assumed to be correct.
%getting at least one targeted location from $U$.
%The defender can hire at most $k$ informants and want to maximize defender's utility. 

In practice, tips are infrequent and the defender is often very protective of the informants. Thus, the attackers are often not aware of the existence of informants unless there is a significant insider threat. 
In addition, patrols can be divided into two categories -- routine patrols and ambush patrols, where the latter are in response to tips from informants. Ambush patrols are costly, often requiring rangers to lie in wait for many hours for the possibility of catching a poacher. If not informed, the defender follows her routine patrol strategy $\mathbf{x}_0=(x_1,\ldots,x_n)$ with $x_i$ denoting the probability that target $i$ is covered. Naturally, under this
assumption the defender should use a strategy $x_0$ that is optimal
against the QR model, which can be computed by following \cite{yang2012computing}. If informed she uses different strategies $\mathbf{x}(\mathbf{V})$ based on the tip $\mathbf{V}$. Assume that each attacker, if deciding to attack a target, will respond to the defender's strategy following a known behavioral model -- the QR model.
%Specifically, when an attacker quantal response to a defender's strategy $\mathbf{x}'=(x'_1,\ldots,x'_n)$, he will choose target $i$ to attack with probability 
We define  $\QR(\mathbf{x}'):=(q'_1,\ldots,q'_n)$, where $q'_i$ is the probability of attacking target $i$ defined by
\begin{gather}
\label{eq:quantal_response}
    q'_i=\frac{e^{\lambda \left[x'_iP_i^a+(1-x'_i)R_i^a\right]}}{\sum_{j\in T}e^{\lambda \left[x'_jP_j^a+(1-x'_j)R_j^a\right]}},
\end{gather}
and $\mathbf{x}'$ is the attacker's \emph{subjective belief} of the coverage probabilities.
In the above equation, $\lambda\geq 0$ is the precision parameter \cite{mckelvey1995quantal} fixed throughout the paper. 
We discuss the relaxation of the some of the assumptions mentioned above in Section \ref{sec:conclusion}.

\subsection{Level-$\kappa$ Response Model}
\label{sec:level-k}
Motivated by the costly ambush patrols and inspired by the cognitive hierarchy theory, we propose the level-$\kappa$ response model as the attackers' behavior model.

When the informants' report intensities are negligible, the attackers are almost always faced with the routine patrol $\mathbf{x}_0$. But when the informants' report intensities are not negligible, the attackers' behavior will change the marginal probability that a target is covered. Thus we assume that level-0 attackers just play the quantal response against the routine patrol $\mathbf{x}_0$: $\mathbf{q}^0=\QR(\mathbf{x}_0)$. 
%Naturally, under this assumption the defender should use a strategy $\mathbf{x}_0$ that is optimal against the QR model, which can be computed with algorithms developed by \citeauthor{yang2012computing} \shortcite{yang2012computing}. 
Then the defender will likely get informed with different tips $\mathbf{V}$, and respond with $\mathbf{x}(\mathbf{V})$ accordingly. Over time, the attackers will learn about the change in the frequency that a target is covered. We denote the induced defender's marginal strategy at level 0 by $\hat{\mathbf{x}}^0=\MS(\mathbf{x}_0,\mathbf{x},\mathbf{q}^0)$. %, otherwise adopt the regular defend strategy $\mathsf{x}$. 
After observing $\hat{\mathbf{x}}^0$ at level 0, level-1 attackers will update their strategies from $\mathbf{q}^0$ to $\mathbf{q}^1=\QR(\hat{\mathbf{x}}^0)$. 
Similarly, attackers at level $\kappa$ ($0<\kappa< \infty$) will use quantal response against the defender's marginal strategy at level $\kappa-1$, i.e., $\mathbf{q}^{\kappa}=\QR(\mathbf{\hat{x}}^{\kappa-1})$, where $\hat{\mathbf{x}}^{\kappa-1}=\MS(\mathbf{x}_0,\mathbf{x},\mathbf{q}^{\kappa-1})$. 
In Section \ref{levelinf}, we also define level-$\infty$ attackers.
%by assuming the defender will adopt $\mathbf{x}_{\kappa}=\mathbf{\hat{x}}_{\kappa-1}$ when she is not informed, and therefore the induced defender's marginal strategy at level $\kappa$ will be $\hat{\mathbf{x}}_{\kappa}=\MS(\mathbf{x}_{\kappa},\mathbf{q}_{\kappa})$.

%We define the attackers at level $\kappa$ as type-$\kappa$ attackers.  The prior knowledge the defender has about the types of attackers could be represented as  $\mathbf{\sigma}=(\sigma_0,\sigma_1,\ldots)$, where $\sigma_i$ is the probability of each attacker being type-$i$ and $\sum_{i\geq 0}\sigma_i=1$.

Denote by $\DefEU(U)$ the defender's optimal utility when they recruit a set of informants $U$ and use the optimal defending strategy.
%by following the optimal strategy of allocating defensive resources  
The key questions raised given this model are (i) how to recruit a set $U$ of at most $k$ informants and (ii) how to respond to the provided tips to maximize the expected $\DefEU(U)$?

%When the defender is informed, she will be tip-driven and adopt the optimal strategy that first allocates the defensive resources to whichever targets there will be an attacker for sure and allocates the rest to other targets. That is, if there are more than $r$ targets reported, the defender will choose $r$ targets among those to maximize her utility, otherwise she will cover all the reported targets, and optimally allocate the rest of the resources to unreported targets.
%the probability of getting at least one poaching location from informants.
%\fi

%\input{chapters/complexity.tex}
%\input{chapters/findingU.tex}
\section{Defending against Level-0 Attackers}
\label{Sec:level0}

In this section, we first tackle the case where all attackers are level-0 by providing complexity results and algorithms to find the optimal set of informants. Designing efficient algorithms to solve this computationally hard problem is particularly challenging due to the combinatorial nature of the tips and exponentially many possibilities of informant selections. Furthermore, in the general case, attackers are heterogeneous and we do not know which attackers will be reported, making it hard to compute $\DefEU(U)$.
\subsection{Complexity Results}
\newcommand{\MCP}{$\mathsf{MCP}$\xspace}

%In this subsection, we first show that when the defender is informed, we can find the optimal strategy to allocate defensive resources based on the tips in linear time. We then show the problem of selecting an optimal set of informants is NP-Hard.
% When the defender is informed by the set of informants $U$, the optimal allocation of defensive resources can be determined in linear time based on the given tips $\mathbf{V}$. 
% The allocation is determined in a greedy fashion. Given the tips, the defender can infer the posterior probability of each unreported attacker launching an attack, and calculate each target $i$'s expected increment in utility if $i$ is covered, and allocate the resources to targets that lead to the highest utility.
Let $\mathbf{q}^0=(q_1,\ldots,q_n).$
Before presenting our complexity results, we first define some useful notations. Given the set of informants $U$ and the tips $\mathbf{V}=(V_1,\ldots, V_n)$, we denote by $\tilde{p}_v(V_0)$ the probability of $v\in Y$ attacking a target given $V_0$ such that $V_0=\bigcup_{i\in T}V_i$. We can compute $\tilde{p}_v(V_0)$ with
\begin{gather*}
\tilde{p}_v(V_0)=\begin{cases}    
1 &  v\in V_0\\             
\frac{(1-\tilde{w}_{v})p_v}{(1-\tilde{w}_{v})p_v+1-p_v} & v\in V\setminus V_0\\
p_v & v\in Y\setminus V            
\end{cases},
\end{gather*}
where $\tilde{w}_{v}=1-\prod_{(u,v)\in E,u\in U}(1-w_{uv})$ is the probability of $v$ being reported given he attacks.
%The defender should, for each target $i\in T$, calculate the expected utility
Given $V_0$ and $t_i=|V_i|$ reported attacks on each target $i$, we compute the expected utility on $i$ if $i$ is covered with
$\EU_i^c(t_i,V_0):=\left(t_i+q_i\sum_{v\in Y\setminus V_0}\tilde{p}_v(V_0)\right)R_i^d.$
We compute the expected utility if $i$ is uncovered, $\EU_i^u(t_i,V_0)$, similarly.
% \[\EU_i^u(t_i,V_0):=\left(t_i+q_i\sum_{v\in Y\setminus V_0}\tilde{p}_v(V_0)\right)P_i^d\] if not, 
%and then allocate the resources to $r$ of the targets with the highest expected gain 
Then, the expected gain of the target if covered can be written as $\EG_i(t_i,V_0):=\EU_i^c(t_i,V_0)-\EU_i^u(t_i,V_0)$. 

\begin{restatable}[]{theorem}{thmalloc}
\label{thmalloc}
    When the defender is informed by informants $U$, the optimal allocation of defensive resources can be determined in $O(|Y|+n)$ time given the tips $\mathbf{V}=(V_1,\ldots, V_n)$.
\end{restatable}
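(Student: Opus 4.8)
The plan is to reduce the defender's best response to a simple top-$r$ selection and then bound the cost of evaluating the quantities involved. First I would fix the tip $\mathbf{V}$ and write the defender's expected utility under a coverage vector $\mathbf{x}=(x_1,\dots,x_n)$ as
\begin{gather*}
\sum_{i\in T}\bigl[x_i\,\EU_i^c(t_i,V_0)+(1-x_i)\,\EU_i^u(t_i,V_0)\bigr]\\
=\sum_{i\in T}\EU_i^u(t_i,V_0)+\sum_{i\in T}x_i\,\EG_i(t_i,V_0).
\end{gather*}
The crucial observation for level-$0$ attackers is that $\mathbf{q}^0$ and the reported targets do not depend on the response $\mathbf{x}(\mathbf{V})$, so once $\mathbf{V}$ is fixed every $\EG_i(t_i,V_0)$ is a constant and the objective is linear in $\mathbf{x}$. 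Since the first sum is independent of $\mathbf{x}$, maximizing the defender's utility is equivalent to maximizing $\sum_{i}x_i\,\EG_i(t_i,V_0)$ over the feasible region $\{\mathbf{x}:0\le x_i\le 1,\ \sum_i x_i\le r\}$.

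Second, I would argue the optimum is a pure top-$r$ allocation. Writing the expected number of attacks on $i$ as $N_i=t_i+q_i\sum_{v\in Y\setminus V_0}\tilde{p}_v(V_0)\ge 0$, we have $\EU_i^c(t_i,V_0)=N_iR_i^d$ and $\EU_i^u(t_i,V_0)=N_iP_i^d$, hence $\EG_i(t_i,V_0)=N_i(R_i^d-P_i^d)\ge 0$ because $R_i^d>0>P_i^d$. The feasible region is a polytope with integral vertices, and a mixed response has expected utility equal to the linear objective evaluated at its marginal coverage vector; therefore maximizing a nonnegative-coefficient linear function there is achieved by covering the $\min(r,n)$ targets with the largest gain. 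Thus the optimal allocation simply protects the $r$ targets of highest expected gain.

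Finally I would account for the running time, which is the real content of the bound. The per-attacker quantities $\tilde{w}_v$ (hence $\tilde{p}_v$ for connected attackers) and the probabilities $q_i$ depend only on $U$ and the routine patrol, not on the realized tip, so they are precomputed once per informant set and treated as given here. To evaluate the common term $S:=\sum_{v\in Y\setminus V_0}\tilde{p}_v(V_0)$ I would use the identity
\begin{gather*}
S=\Bigl(\sum_{v\in Y\setminus V}p_v+\sum_{v\in V}\rho_v\Bigr)-\sum_{v\in V_0}\rho_v,\qquad \rho_v:=\tfrac{(1-\tilde{w}_v)p_v}{(1-\tilde{w}_v)p_v+1-p_v},
\end{gather*}
in which the bracketed quantity is a precomputed constant and only the correction $\sum_{v\in V_0}\rho_v$ must be recomputed per tip; since $V_0\subseteq Y$ this costs $O(|Y|)$, matching the $O(|Y|)$ already needed to read the tip and form the counts $t_i$. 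Given $S$, all $n$ gains $\EG_i(t_i,V_0)$ follow in $O(n)$, and the $r$ largest are extracted by a linear-time selection procedure in $O(n)$, for a total of $O(|Y|+n)$. The main obstacle is precisely this accounting: naively recomputing $\tilde{w}_v$ or $S$ from the edge set, or sorting to pick the top $r$, would introduce an $|E|$ factor or a $\log n$ factor, so the argument hinges on identifying which quantities are tip-independent and on replacing the sort with linear-time selection.
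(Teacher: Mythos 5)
Your proof is correct and takes essentially the same route as the paper's: both reduce the informed best response to covering the $r$ targets with the highest expected gain $\EG_i(t_i,V_0)$ (the paper via a greedy-increment argument, you via linearity of the objective and integrality of the polytope, which are equivalent here) and then claim $O(|Y|+n)$ time. Your runtime accounting is in fact more careful than the paper's one-line claim—making explicit that $\tilde{w}_v$, $q_i$, and the bracketed constant are tip-independent precomputations, that the shared sum $S$ needs only an $O(|V_0|)$ correction per tip, and that the top-$r$ extraction uses linear-time selection rather than sorting—but these fill in details the paper leaves implicit rather than changing the argument.
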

Given tips from recruited informants, the defender can find the optimal resource allocation by greedily protecting the targets with the highest expected gains. The proof of Theorem \ref{thmalloc} is deferred to Appendix \ref{app:thmalloc}. %\footnote{An anonymous link to the Appendix: \url{https://www.dropbox.com/s/ddc1mndfbsca6q2/Appendix_GreenSecurityGamewithCommunityEngagement.pdf?dl=0}.} 
However, the problem of computing the optimal set of informants is still hard.
%Next, we show that the problem of computing the optimal set of informants is NP-Hard even for a relatively simple case by constructing a reduction from the maximum coverage problem (\MCP). 

%\dz{I think it makes sense to move the proof of NP-hardness proof to the appendix.}
\begin{restatable}[]{theorem}{thmnphard}
\label{thm:nphard}
	Computing the optimal set of informants to recruit is NP-Hard. % even with $r=1$ and uniform targets ($R_i^d$'s ($R_i^a, P_i^d, P_i^a$) are the same for all $i$). 
\end{restatable}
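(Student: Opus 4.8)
The plan is to reduce from the Maximum Coverage Problem (\MCP), which is NP-hard: given a universe $\mathcal{U}=\{e_1,\dots,e_N\}$, a family of sets $\mathcal{S}=\{S_1,\dots,S_M\}$ with $S_j\subseteq\mathcal{U}$, and a budget $k$, find $k$ sets maximizing the size of their union. I would show that any algorithm computing the optimal informant set can be used to solve \MCP, so the latter's hardness transfers. Concretely, I would argue that under a carefully chosen gadget the optimal choice of $U$ is exactly the choice maximizing the number of covered attackers, so reading the covered set off the optimal solution answers \MCP.

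For the \textbf{construction}, introduce one informant $u_j$ per set $S_j$ and one attacker $v_i$ per element $e_i$, and add the edge $(u_j,v_i)\in E$ iff $e_i\in S_j$, with all intensities $w_{u_j v_i}=1$. Then for any $U\subseteq X$ the attackers connected to $U$ are exactly the elements covered by the chosen sets, i.e.\ $V=N(U)$ with $|N(U)|$ equal to the \MCP objective. I would keep the target side deliberately trivial and symmetric so that it contributes nothing combinatorial: $n=2$ targets with identical payoffs ($R^d_i=R>0$, $P^d_i=P<0$, and matching attacker payoffs), a single resource $r=1$, and identical attack probabilities $p_v=p\in(0,1)$. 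Symmetry makes the level-$0$ response uniform, $\mathbf{q}^0=(1/2,1/2)$, and the optimal routine strategy $\mathbf{x}_0=(1/2,1/2)$. The budget $k$ is inherited unchanged, and the whole construction is clearly polynomial.

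The core structural claim is that, by the symmetry of the attackers, $\DefEU(U)$ depends only on $s:=|N(U)|$; write $\DefEU(U)=f(s)$. The crux is to show that $f$ is \emph{strictly} increasing, for then maximizing $\DefEU$ over $|U|\le k$ coincides with maximizing $|N(U)|$. To prove monotonicity I would compare nested covered sets $A\subset B=A\cup\{v^*\}$. The tip the defender sees when $B$ is recruited refines the tip she sees under $A$, since she can always ignore $v^*$'s report and replay the optimal $A$-response on the same attack realizations; this information-refinement argument gives $f(|B|)\ge f(|A|)$. Strictness comes from the positive-probability event that $v^*$ is the only attacker: under $B$ the defender is informed, covers the reported target with her single resource, and secures utility $R$, whereas under $A$ (equivalently, under the ignore-$v^*$ strategy) she is uninformed, plays routine, and gets only $\frac{1}{2}R+\frac{1}{2}P<R$ since $R>0>P$. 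Hence $f(s+1)>f(s)$.

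The \textbf{main obstacle} is exactly this strict-monotonicity step, because in general it must be argued through the quantal response and the optimal informed allocation of Theorem~\ref{thmalloc}. The symmetric, single-resource, two-target gadget is chosen precisely to neutralize these complications: it trivializes the quantal response and reduces the informed-versus-routine comparison to the one-line inequality $R>\frac{1}{2}R+\frac{1}{2}P$, while preserving the combinatorial heart of the problem, namely which informants to recruit, which is what carries the \MCP hardness. I would close by noting that $f$ never has to be evaluated: an oracle returning an optimal informant set $U^\star$ already yields $|N(U^\star)|$, which by strict monotonicity equals the maximum coverage achievable with $k$ sets, completing the reduction and establishing NP-hardness.
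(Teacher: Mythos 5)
Your proposal is correct and reduces from the same source problem as the paper (\MCP) with the identical bipartite encoding---one informant per set, one attacker per element, edges for membership, $w_{uv}=1$---but the gadget and the key monotonicity lemma are genuinely different. The paper sets $p_v=1$, $\lambda=0$, $r=1$, and $n$ uniform targets, so every covered attacker attacks and is reported with certainty; $\DefEU(U)$ then admits a closed form whose only $U$-dependent term is $\mathbb{E}\left[\max_{i}X_i\right]-|V|/n$, where the $X_i$ are multinomial counts of the $|V|$ covered attackers over $n$ targets, and monotonicity in $|V|$ is argued by a coupling (a new attacker hits the current argmax location with probability at least $1/n$). You instead keep $p_v=p\in(0,1)$, use two symmetric targets, and prove monotonicity of $f(s)=\DefEU(U)$ abstractly: the defender with covered set $B=A\cup\{v^*\}$ simulates the $A$-optimal response by discarding $v^*$'s reports, giving $f(|B|)\ge f(|A|)$, with strictness extracted from a positive-probability tip event. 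This buys you two things the paper's gadget does not have: your argument works for any $\lambda$ (the paper needs $\lambda=0$ to trivialize the quantal response), and it delivers \emph{strict} monotonicity cleanly, whereas the paper's coupling bound is tight---at the step $|V|=1\to 2$ the gain in $\mathbb{E}\left[\max_i X_i\right]$ is exactly $1/n$ and is fully offset by the $-|V|/n$ term, so the paper's claimed equivalence between maximizing $\DefEU$ and maximizing $|V|$ actually needs extra care at that point.

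One step you should make explicit: the event ``$v^*$ is the only attacker'' is not tip-measurable (attackers in $Y\setminus B$ may attack unreported), so you cannot directly compare conditional utilities on that event; you must compare expectations of strategies. The fix is immediate in your gadget: modify the simulation only on the tip ``$v^*$ alone reported, on target $i$'' to cover target $i$ deterministically. On $v^*$'s attack this gains $R-\bigl(\tfrac{1}{2}R+\tfrac{1}{2}P\bigr)>0$, and it is utility-neutral with respect to unreported attackers, because they attack uniformly over the two targets, so any deterministic single-target coverage yields the same $\tfrac{1}{2}R+\tfrac{1}{2}P$ per unreported attack as the routine $(\tfrac12,\tfrac12)$. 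With that sentence added, your reduction is complete and, on the strictness point, somewhat tighter than the paper's own argument.
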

The proof of Theorem \ref{thm:nphard} in Appendix \ref{app:nphard} focuses on a relatively simple case and constructs a reduction from the maximum coverage problem (\MCP).
%We present the proof of Theorem \ref{thm:nphard} in Appendix B. 

\subsection{Finding the Optimal Set of Informants}

In this subsection, we develop exact and heuristic informant selection algorithms to compute the optimal set of informants. To find the $U$ that maximizes $\DefEU(U)$, we first focus on computing $\DefEU(U)$ by providing a dynamic programming-based algorithm and approximate algorithms.
%In the special case discussed in the proof of Theorem \ref{thm:nphard}, $\DefEU(U)$ can be written in closed form since we know the sets of reported attackers.  
%We provide a dynamic programming-based algorithm and approximate algorithms to calculate $\DefEU(U)$.
%, on which the informant selection algorithms heavily rely.

\subsubsection{Calculating $\DefEU(U)$}
% \gty{maybe should split the section Calculating DefEU(U) into several smaller subsections for easier readability. e.g. <5.1 Optimal Policy and exponential time algorithm>, <5.2 Polynomial run time in special case>, <5.3 Approximation algorithms for DefEU(U)>}
%\subsection{Policy}

%We treat the set of recruited informants $U$ as given, and focus on calculating $\DefEU(U)$. 
%That is, we show  optimally.
%By following the optimal allocation strategy, we develop an enumeration and dynamic programming-based algorithm (\BF) that runs in exponential time to calculate the exact value of $\DefEU(U)$ for the general case with the aid of a dynamic programming-based calculation. For the case with strong information sharing intensity (SISI) where $\forall (u,v)\in E, w_{uv}=1$, we show that $\DefEU(U)$ can be computed in polynomial time. We also introduce  approximation methods, \Ctrunc and \Sampling, to efficiently estimate $\DefEU(U)$ and show that under certain assumptions \Ctrunc provides estimations with bounded error.

%Let $U\subseteq X$ where $|U|\leq k$ be the set of informants we hire and $V=\{v:(u,v)\in E\land u\in U\}$. 

%By following the optimal allocation strategy Theorem \ref{thmalloc} provides, 
Let $\DefEU_0$ be the expected utility when using the optimal regular defending strategy against a single attack, which can be obtained by the algorithms introduced in \cite{yang2012computing}. Then
$\DefEU(U)$ can be explicitly written as 
\begin{align*}
&\DefEU(U)=\Pr[V_0=\emptyset]\DefEU_0\\
+&\Pr[V_0\neq\emptyset]\mathsf{E}\left[\sum_{i\in[n]}\bx_i(V)\EG_i(t_i,V_0)+\EU_i(t_i,V_0)\Big|V_0\neq\emptyset\right].    
\end{align*}
To directly compute $\DefEU(U)$ from the above equation is formidable due to the exponential number of tips combinations. However, it is possible to reduce a significant amount of enumeration by handling the calculation carefully. We first develop an Enumeration and Dynamic Programming-based Algorithm (\BF) to compute the exact $\DefEU(U)$ as shown in Algorithm  \ref{AlgoBF}.

First, we compute the utility when the defender is not informed (lines 4-6).
Then, we focus on calculating the total utility $\DefEU'(U)$ in the case when the defender is informed.  
By the linearity of expectation, $\DefEU'(U)$ can be computed as the summation of the expected utility obtained from all targets. Therefore, we focus on the calculation of the expected utility of a single target $i$. % as will be shown in Algorithm \ref{AlgoBF}.
For each target $i$, Algorithm~\ref{AlgoBF} enumerates all possible types of tips (lines 2-7). We denote each type of tip by a tuple $(t_i,V_0)$, which encodes the set of reported attackers $V_0\neq \emptyset$
% the number of attackers $c_0$ that go poaching, 
and the number of reported attackers $t_i$ targeting location $i$. The probability of receiving $(t_i,V_0)$ can be written as  \[\Pr(t_i,V_0|U)=P_{V_0}{|V_0|\choose t_i}q_i^{t_i}(1-q_i)^{|V_0|-t_i},\] where 
\begin{eqnarray}\label{eqnpv0}
    P_{V_0}=\prod_{v\in V_0}(\tilde{w}_{v}p_v) \prod_{v\in V\setminus V_0}(1-\tilde{w}_{v}p_v)
\end{eqnarray}
is the probability of having $V_0$ being the set of reported attackers given $U$ (line 3).
Let $P_{i,r}$ be the probability of $i$ being among the $r$ targets with the highest expected gain given $(t_i, V_0)$ and $U$ (lines 12-13). For a given tip type $(t_i, V_0)$, the expected contribution to $\DefEU'(U)$ of target $i$ is
\begin{eqnarray*}
&\Pr(t_i,V_0|U)\cdot \EU_i(t_i,V_0)+P_{V_0}{|V_0|\choose t_i}q_i^{t_i}\cdot P_{i,r}\EG_i(t_i,V_0)\\
&=P_{V_0}{|V_0|\choose t_i}q_i^{t_i}\left((1-q_i)^{|V_0|-t_i}\EU_i(t_i,V_0)+P_{i,r} \EG_i(t_i,V_0)\right).
\end{eqnarray*}
We can then compute $\DefEU'(U)$ by summing over all possible $t_i, V_0 \neq \emptyset$.

% As a result, 
%\begin{eqnarray*}
%    &&\DefEU'(U)\\
%    &=&\sum_{V_0\neq\emptyset,i,t_i}P_{V_0}{|V_0|\choose t_i}q_i^{t_i}\left((1-q_i)^{|V_0|-t_i}\EU_i(t_i,V_0)+P_{i,r} \EG_i(t_i,V_0)\right)\\
   % &&+P_{\emptyset}\sum_{v\in Y}\tilde{p}_v(\emptyset)\DefEU_0.
%\end{eqnarray*}

%Given $V_0$ and the number of reported attacks $\tilde{y}$ on $j$, we already have the expected gain on target $j$ as  $EG_j(\tilde{y},V_0)=(\tilde{y}+q_{j}\sum_{v\in Y\setminus V_0}\tilde{p}_v(V_0)) (R_j^d-P_j^d)$. %where $t_j$ is the number of reported attacks on $j$. 

The calculation of $P_{i,r}$ is all that remains. 
% At first sight, $P_{i,r}$ can be calculated by enumerating all possible tips of type $(t_i,V_0)$, so that we can easily know, for each instance, whether $i$ is among the $r$ targets with the highest $\EG_i$. To get $P_{i,r}$, we simply add up the probabilities of having those corresponding tips.
This can be done very efficiently via Algorithm \ref{AlgoDP}, a dynamic programming-based calculation. %to obtain $P_{i,r}$.
Let $\{{i_1},\ldots,{i_{n-1}}\}$ denote the set of targets apart from $i$, i.e., $T\setminus \{i\}$ (line 1) and $y!\cdot f(s,x,y)$
be the probability of having $y$ reported attacks among the first $s$ targets with $x$ of the targets having expected gain higher than $\EG_i$ given the tips of type $(t_i,V_0)$. Therefore, $f(s,x,y)$ can be neatly written as %where %$f(s,x,y)$ 
\begin{eqnarray*}
	f(s,x,y)&=&\sum_{\substack{a_1+\cdots+a_{s}=y,\\\sum_{j=1}^{s}\mathbf{1}_{[\EG_{i_j}(a_j,V_0)>\EG_i(t_i,V_0)]}=x}}\frac{q_{i_1}^{a_1}q_{i_2}^{a_2}\cdots q_{i_{s}}^{a_{s}}}{a_1!a_2!\cdots a_{s}!},
\end{eqnarray*}
which can be calculated using dynamic programming (line 5-11). 
Computing $f(s,x,y)$ is done in a similar way by counting the number of $s$-partitions on integer $y$, where we also consider the constraint brought in by the limitation on the number of resources. To calculate $f(s,x,y)$, we enumerate $a_s$ as $\tilde{y}$ (line 6) and compare $\EG_{i_s}(a_s,V_0)$ with $\EG_i(t_i,V_0)$ (line 8). If $\EG_{i_s}(a_s,V_0)>\EG_i(t_i,V_0)$, we check the value of $f(s-1,x-1,y-\tilde{y})$ (line 9), otherwise check $f(s-1,x,y-\tilde{y})$ (line 11).
Thus, we have $P_{i,r}=(|V_0|-t_i)! \left(\sum_{x=0}^{r-1}f(s,x,|V_0|-t_i)\right).$
The time complexity for Algorithm \ref{AlgoDP} is $O(nr|Y|^2)$ and $O(2^{|Y|}n^2r|Y|^3)$ for Algorithm \ref{AlgoBF}.

 % for the case where $w_{uv}=1\, \forall (u,v)\in E$.

\begin{algorithm}[t]
	\caption{Calculate $\DefEU(U)$}\label{AlgoBF}
	\begin{algorithmic}[1]
		%	\For{$v\in Y$}
		%	\State $\tilde{p}_v\gets \left(1-\prod_{(u,v)\in E,u\in U}(1-w_{uv})\right)$
		%	\EndFor
		%	\State Expand the polynomial $\prod_{v\in Y} (1-p_v+p_v(1-\tilde{p}_v)x+p_v\tilde{p}_vxy)$ of $x$ and $y$ as $\sum_{0\leq i,j\leq |Y|}f_{i,j}x^iy^j$
		%\State Initialize $EG_i\gets 0$ for each target $i\in T$
		\State $\EU\gets 0$
		%\For{$c_0\gets 0$ to $n$}\Comment{Enumerate the number of attackers $c_0$ that go poaching}
		\For{all possible sets of reported attackers $V_0\subseteq V$}%\Comment{Enumerate the set of reported attackers }
		\State $P_{V_0}\gets \prod_{v\in V_0}(\tilde{w}_{v}p_v) \prod_{v\in V\setminus V_0}(1-\tilde{w}_{v}p_v)$
		\If {$V_0=\emptyset$}
		\State $\EU=\EU+P_{V_0}\sum_{v\in Y}\tilde{p}_v(V_0)\DefEU_0$
		\State Continue to line 2
		\EndIf
		\For {target $i\in T$ and $0\leq t_i\leq |V_0|$}
		% \Comment{Enumerate target $i$ and the number of attackers $t_i$ targeting $i$}
		
		\State Calculate $f(\cdot)$ given $|V_0|,i,t_i$
		\State $\EG_i\gets(t_i+q_i\sum_{v\in Y\setminus V_0}\tilde{p}_v(V_0)) (R_i^d-P_i^d)$
		\State $\EU_i^u\gets (t_i+q_i\sum_{v\in Y\setminus V_0}\tilde{p}_v(V_0))P_i^d$
		\State $P_{i,r}\gets (|V_0|-t_i)! \left(\sum_{x=0}^{r-1}f(s,x,|V_0|-t_i)\right)$
		%\State $\tilde{U}\gets 0$
		\State $\EU=\EU+P_{V_0} {|V_0| \choose t_i}q_i^{t_i}\cdot P_{i,r}\cdot \EG_i$
		\State $\EU=\EU+P_{V_0} {|V_0| \choose t_i}q_i^{t_i}(1-q_i^{t_i})\EU_i^u$
		%\State $EU=EU+f_{c_0,c_1}{c_1\choose t_i}\cdot\left( EG_i\cdot (c_1-t_i)! \sum_{x=0}^{r}f(s,x,c_1-t_i)+EU_i^u\right)$
		%\For{$x\gets 0$ to $r$} 
		% \State 
		%\EndFor
		
		\EndFor
		\EndFor
		%\EndFor
		\State $\DefEU(U)\gets \EU$
	\end{algorithmic}
\end{algorithm}

\begin{algorithm}[t]
    \caption{Calculate $f(\cdot)$ given $|V_0|,i,t_i$}\label{AlgoDP}
    \begin{algorithmic}[1]
    \State $\{{i_1},\ldots,{i_{n-1}}\}\gets T\setminus \{i\}$
    \State $\EG_i\gets(t_i+q_i\sum_{v\in Y\setminus V_0}\tilde{p}_v(V_0)) (R_i^d-P_i^d)$
    \State Initialize $f(s,x,y)\gets 0$ for all $s,x,y$ 
		\State $f(0,0,0)\gets 1$
		\For{$s$ in $[1,n-1]$, $x$ in $[0, \min(s,r)]$, $y$ in $[0, |V_0|-t_i]$}
		\For{$\tilde{y}$ in $[0, y]$}
		\State $\EG_{i_s}\gets (\tilde{y}+q_{i_s}\sum_{v\in Y\setminus V_0}\tilde{p}_v(V_0)) (R_{i_s}^d-P_{i_s}^d)$
		\If {$\EG_{i_s}>\EG_i$}
		\State $f(s,x,y) \pluseq \frac{q_{i_s}^{\tilde{y}}}{\tilde{y}!} f(s-1,x-1,y-\tilde{y})$
		\Else 
		\State $f(s,x,y) \pluseq \frac{q_{i_s}^{\tilde{y}}}{\tilde{y}!}f(s-1,x,y-\tilde{y})$
		\EndIf
		\EndFor  
		\EndFor
    \end{algorithmic}
\end{algorithm}

Since \BF runs in exponential time, we introduce approximation methods to estimate $\DefEU(U)$. 
Let $\DefEU(U,C)$ be the estimated defender's utility returned by Algorithm \ref{AlgoBF} if only subsets of reported attackers $V_0$ with $|V_0|< C$ are enumerated in line 2. We denote by \Ctrunc this approach of estimating $\DefEU(U)$.
Next, we show that $\DefEU(U,C)$ is close to the exact $\DefEU(U)$ when it is unlikely to have many attacks happening at the same time. Formally, assume that the expected number of attacks is bounded by a constant $C'$, that is $\sum_{v\in Y}p_v\leq C'$, $\DefEU(U,C)$ for $C>C'$ is an estimation of $\DefEU(U)$ with bounded error.

\begin{restatable}[]{lemmma}{lemmabound}
\label{lemmabound}
	Assume that $\sum_{v\in Y}p_v\leq C'$ and $|P_i^d|, |R_i^d|\leq Q$, the error of estimation $|{\DefEU}(U,C)-\DefEU(U)|$ for $C> C'$ is at most:
	\begin{gather*}
	    Q \cdot e^{-2(C-C')^2/|Y|}\left( C+\frac{1}{1-e^{-4(C-C')/|Y|}}\right).
	\end{gather*}
\end{restatable}
The proof of Lemma \ref{lemmabound} is deferred to Appendix \ref{app:ctruncated-proof}.
%begin{proof}
%	Let the random variable $W$ represent the number of attacks. We use Chernoff bounds to upper bound $\Pr[W \geq C]$. This leads to a bound on $|\DefEU(U)-{\DefEU}(U,C)|$. See Appendix~\ref{app:ctruncated-proof} for a full proof.
%\end{proof}
The time complexity of \Ctrunc is given by $O(n^2r|Y|^{C+3})$.

However, for the case where $\sum_{v\in Y}p_v$ is large, we have to set $C$ to be larger than $\sum_{v\in Y}p_v$ for \Ctrunc in order to obtain a high-quality solution; otherwise the error will become unbounded. 
To mitigate this limitation, we also propose an alternative sampling approach, \Sampling, to estimate $\DefEU(U)$ for general cases without restrictions on $\sum_{p_v}$. Instead of enumerating all possible $V_0$ as \BF does, in \Sampling, we draw $\mathsf{T}$ i.i.d. samples of the set of reported attackers where each sample $V_0$ is drawn with probability $P_{V_0}$. \Sampling takes the average of the expected defender's utility when having $V_0$ as the reported attackers
%\[\sum_{0\leq t_i\leq |V_0|,i\in T}{|V_0|\choose t_i}q_i^{t_i}\left((1-q_i)^{|V_0|-t_i}\EU_i(t_i,V_0)+P_{i,r} \EG_i(t_i,V_0)\right)\]
over all samples as the estimation of $\DefEU(U)$. We can sample $V_0$ as follows: (i) Let $V_0=\emptyset$ initially;
    (ii) For each $v\in V$, add $v$ to $V_0$ with probability $\tilde{w}_vp_v$;
    (iii) Return $V_0$ as a sample of the set of reported attackers.
\iffalse
\begin{enumerate}[label=(\roman*)]
    \item Let $V_0=\emptyset$ initially;
    \item For each $v\in V$, add $v$ to $V_0$ with probability $\tilde{w}_vp_v$;
    \item Return $V_0$ as a sample of the set of reported attackers.
\end{enumerate}
\fi
From Equation (\ref{eqnpv0}), the above sampling process is consistent with the distribution of $V_0$. \Sampling returns an estimation of $\DefEU(U)$ in $O(\mathsf{T}n^2r|Y|^3)$ time. 

\begin{proposition}
    Let $\DefEU^{(\mathsf{T})}(U)$ be the estimation of $\DefEU(U)$ given by \Sampling using $\mathsf{T}$ samples. We have:
    $\lim_{\mathsf{T} \rightarrow \infty}\DefEU^{(\mathsf{T})}(U)=\DefEU(U)$
\end{proposition}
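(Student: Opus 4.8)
The plan is to recognize \Sampling as a standard Monte Carlo estimator and invoke the strong law of large numbers. First I would fix the recruited set $U$ and, for each subset $V_0 \subseteq V$, define $g(V_0)$ to be the \emph{exact} expected defender utility conditioned on $V_0$ being the realized set of reported attackers; this is the per-sample quantity that \Sampling computes deterministically, by averaging analytically over which target each reported attacker hits and over the unreported attacks, exactly as in the inner loop of \BF for a single $V_0$. By the law of total expectation over the reported set,
\begin{gather*}
\DefEU(U) = \sum_{V_0 \subseteq V} P_{V_0}\, g(V_0) = \mathsf{E}_{V_0 \sim P}\big[g(V_0)\big],
\end{gather*}
where $P$ is the distribution assigning mass $P_{V_0}$ from Equation~(\ref{eqnpv0}) to each $V_0$. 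Writing $V_0^{(1)},\dots,V_0^{(\mathsf{T})}$ for the i.i.d.\ draws, the estimator is the empirical mean $\DefEU^{(\mathsf{T})}(U) = \frac{1}{\mathsf{T}}\sum_{j=1}^{\mathsf{T}} g\big(V_0^{(j)}\big)$.

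Next I would verify the two ingredients needed for the law of large numbers. For unbiasedness, the sampling routine (steps (i)--(iii)) includes each $v \in V$ independently with probability $\tilde w_v p_v$, which by Equation~(\ref{eqnpv0}) reproduces exactly the distribution $P$; hence each term $g\big(V_0^{(j)}\big)$ is an i.i.d.\ draw with common mean $\mathsf{E}[g(V_0)] = \DefEU(U)$. For integrability, note that $g$ is bounded: in any realization at most $|Y|$ attacks occur, and each contributes a defender payoff in $[\,-Q', Q'\,]$ for $Q' = \max_i \max(|R^d_i|,|P^d_i|)$, so $|g(V_0)| \le |Y|\,Q'$ uniformly over $V_0$. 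In particular the common mean and variance are finite.

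Finally, since $\{g(V_0^{(j)})\}_{j\ge 1}$ are i.i.d.\ and bounded (hence integrable), the strong law of large numbers gives $\DefEU^{(\mathsf{T})}(U) \to \mathsf{E}[g(V_0)] = \DefEU(U)$ almost surely as $\mathsf{T} \to \infty$ (convergence in probability follows a fortiori). The only point requiring care---rather than a genuine obstacle---is the first step: one must confirm that \Sampling evaluates the conditional expectation $g(V_0)$ \emph{exactly} for each sampled $V_0$, so that the estimator is a clean average of i.i.d.\ terms with no residual within-sample noise. Once that is established, the distributional match and boundedness are immediate, and the strong law of large numbers finishes the argument.
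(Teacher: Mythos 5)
Your proof is correct and is essentially the argument the paper intends: the paper states this proposition without an explicit proof, noting only that the sampling process is consistent with the distribution $P_{V_0}$ of Equation~(\ref{eqnpv0}), and your write-up supplies the standard supporting details (that \Sampling computes the exact conditional expectation $g(V_0)$ per sample, that the i.i.d.\ draws are unbiased for $\DefEU(U)$ by total expectation, that $|g(V_0)|\le |Y|\,Q'$ gives integrability, and the strong law of large numbers). Your one refinement over the paper is making explicit that the convergence is almost sure rather than deterministic, a qualification the paper's statement leaves implicit.
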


\subsubsection{Selecting Informants $U$}
% \begin{figure}
% \begin{center}

% \begin{tabular}{|c|c|}
% \hline\\[-1em]
% Algorithm  & Time Complexity \\
% \hline
% \BF & $O(|X|^k2^{|Y|}n^2r|Y|^3)$\\
% \hline
% \Ctrunc & $O(|X|^kn^2r|Y|^{C+3})$\\
% \hline
% \Sampling & $O(|X|^k\mathsf{T}n^2r|Y|^3)$\\
% \hline
% \SpeCase & $O(|X|^k n^2r|Y|^4)$\\
% \hline
% \Heuristic & $O(2^{k}|X|n^2r|Y|^3)$\\
% \hline

% \end{tabular}
% \end{center}
% \caption{Complexity Table}
% \label{table2}
% \end{figure}

Given the algorithms for computing $\DefEU(U)$, a straightforward way of selecting informants is through enumeration (denoted as \Select).

When using \Ctrunc as a subroutine to compute $\DefEU(U)$, the solution quality of the selected set of informants is guaranteed by the following theorem. 

\begin{theorem}
	Assume that $\sum_{v\in Y}p_v\leq C'$ and $|P_i^d|, |R_i^d|\leq Q$. Let $U_{\OPT}$ and $U'$ be the optimal set of informants and the one chosen by  \Ctrunc. Then for $C>C'$, the error $|\DefEU(U_{\OPT})-\DefEU(U')|$ can be bounded by:
    \begin{gather*}
   2Q \cdot e^{-2(C-C')^2/|Y|}\left( C+\frac{1}{1-e^{-4(C-C')/|Y|}}\right).    
    \end{gather*}
\end{theorem}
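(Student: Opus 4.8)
The plan is to leverage Lemma~\ref{lemmabound} twice, once at the optimal informant set $U_{\OPT}$ and once at the set $U'$ returned by \Ctrunc, and then chain the two bounds together with the optimality of $U'$ under the estimated objective. Let me write $B(C) = Q \cdot e^{-2(C-C')^2/|Y|}\left( C+\frac{1}{1-e^{-4(C-C')/|Y|}}\right)$ for the per-set error bound from Lemma~\ref{lemmabound}, so that for \emph{any} feasible $U$ we have $|\DefEU(U,C)-\DefEU(U)| \leq B(C)$ whenever $C>C'$. The target inequality is then exactly $|\DefEU(U_{\OPT})-\DefEU(U')| \leq 2B(C)$, which signals immediately that the proof is the standard ``approximate optimizer of an approximate objective'' argument: the factor of $2$ comes from paying the estimation error once at each of the two sets.

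The key steps, in order, are as follows. First I would invoke Lemma~\ref{lemmabound} at $U'$ to get $\DefEU(U') \geq \DefEU(U',C) - B(C)$. Second, since \Ctrunc selects $U'$ as the maximizer of the \emph{estimated} utility $\DefEU(\cdot,C)$ over all feasible informant sets, and $U_{\OPT}$ is itself feasible, I would use $\DefEU(U',C) \geq \DefEU(U_{\OPT},C)$. Third, I would apply Lemma~\ref{lemmabound} again, this time at $U_{\OPT}$, to get $\DefEU(U_{\OPT},C) \geq \DefEU(U_{\OPT}) - B(C)$. Chaining these three inequalities yields $\DefEU(U') \geq \DefEU(U_{\OPT}) - 2B(C)$. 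Since $U_{\OPT}$ is optimal for the true objective we also have $\DefEU(U_{\OPT}) \geq \DefEU(U')$, so the absolute difference $|\DefEU(U_{\OPT})-\DefEU(U')|$ collapses to $\DefEU(U_{\OPT})-\DefEU(U')$, which is sandwiched between $0$ and $2B(C)$. Substituting the explicit form of $B(C)$ gives precisely the claimed bound.

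The only real subtlety to flag is making sure the quantifier in Lemma~\ref{lemmabound} is genuinely uniform over informant sets: the hypotheses $\sum_{v\in Y}p_v\leq C'$ and $|P_i^d|,|R_i^d|\leq Q$ are properties of the game instance (the payoffs and attack probabilities), not of the chosen $U$, so the bound $B(C)$ holds for every feasible $U$ with the same constants. This is what licenses applying the lemma at both $U'$ and $U_{\OPT}$ with an identical $B(C)$. I would state this observation explicitly so the reader sees that the two applications of the lemma are legitimate. I do not expect any genuine obstacle here; the entire content of the theorem is the triangle-inequality-style bookkeeping above, and the heavy lifting (the concentration estimate controlling the truncation error) has already been done in Lemma~\ref{lemmabound}.
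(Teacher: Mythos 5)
Your proposal is correct and matches the argument the paper intends (the theorem is stated as an immediate consequence of Lemma~\ref{lemmabound}): apply the lemma's uniform bound once at $U'$ and once at $U_{\OPT}$, use that \Ctrunc selects $U'$ to maximize the truncated estimate $\DefEU(\cdot,C)$, and chain the inequalities to get the factor-of-$2$ bound. Your observation that the lemma's constants depend only on the game instance, not on $U$, is exactly the point that makes the double application legitimate.
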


%\begin{proof}
%By definition of \Ctrunc, we have $\DefEU(U',C)\geq \DefEU(U_{\OPT},C)$.
%    Let $B=2Q \cdot e^{-2(C-C')^2/|Y|}\left( C+\frac{1}{1-e^{-4(C-C')/|Y|}}\right).$
%    Suppose $\DefEU(U_{\OPT})-\DefEU(U')\geq 2B$,
%    by Lemma \ref{lemmabound}, we have
%    \[\DefEU(U',C)<\DefEU(U')+B\leq\DefEU(U_{\OPT})-B<\DefEU(U_{\OPT},C),\]  which leads to a contradiction. 
%\end{proof}

%For \SpeCase and \Sampling, we have the following propositions. 

%\begin{proof}
%    It follows by Lemma \ref{lem:SISI}, since all possible $U$ can be enumerated in $O(|X|^k)$.
%\end{proof}

\begin{proposition}
    Using \Sampling to estimate $\DefEU$, the optimal set of informants can be found when $\mathsf{T}\rightarrow \infty$.
\end{proposition}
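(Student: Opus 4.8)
The plan is to combine the preceding proposition, which gives pointwise convergence of the sampled estimate to the true value for each fixed $U$, with the finiteness of the space of candidate informant sets. First I would observe that the feasible sets $U\subseteq X$ with $|U|\le k$ form a finite collection $\mathcal{U}$, since $X$ is finite. Because \Sampling forms the empirical average of $\mathsf{T}$ i.i.d.\ draws of the per-sample defender utility, the strong law of large numbers yields $\DefEU^{(\mathsf{T})}(U)\to\DefEU(U)$ almost surely as $\mathsf{T}\to\infty$ for every fixed $U\in\mathcal{U}$, which is exactly the content of the preceding proposition.

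The key step is to upgrade this per-$U$ convergence to simultaneous convergence over all of $\mathcal{U}$. Since $\mathcal{U}$ is finite, the intersection of the probability-one convergence events over all $U\in\mathcal{U}$ still has probability one. Hence, almost surely, for every $\varepsilon>0$ there is a threshold $\mathsf{T}_0$ such that $|\DefEU^{(\mathsf{T})}(U)-\DefEU(U)|<\varepsilon$ holds for all $\mathsf{T}\ge \mathsf{T}_0$ and all $U\in\mathcal{U}$ at once. This is where the finiteness is essential: it converts per-set almost-sure convergence into joint convergence through a finite union, without invoking any uniform law of large numbers.

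Finally I would close with a gap argument. Let $U_{\OPT}\in\arg\max_{U\in\mathcal{U}}\DefEU(U)$ and let $\delta$ be the difference between $\DefEU(U_{\OPT})$ and the largest true utility among sets that are strictly suboptimal; if no such set exists then every set is optimal and there is nothing to prove, so assume $\delta>0$. Choosing $\varepsilon<\delta/2$, for all $\mathsf{T}\ge\mathsf{T}_0$ the estimate of any strictly suboptimal $U$ stays below $\DefEU(U_{\OPT})-\delta+\varepsilon$, while the estimate of $U_{\OPT}$ exceeds $\DefEU(U_{\OPT})-\varepsilon$; since $\varepsilon<\delta/2$ the former is strictly smaller than the latter. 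Therefore the maximizer of the sampled estimates, which is precisely what \Select returns, must be an optimal set once $\mathsf{T}\ge\mathsf{T}_0$.

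I expect the only real subtlety to be the mode of convergence. The estimate is a random quantity, so the claim must be read as an almost-sure (equivalently, high-probability) guarantee rather than a deterministic one. Under the almost-sure reading the argument above suffices; if one instead prefers a convergence-in-probability formulation, the identical gap argument together with a union bound over the finitely many sets in $\mathcal{U}$ shows that the probability that \Select returns an optimal set tends to $1$ as $\mathsf{T}\to\infty$.
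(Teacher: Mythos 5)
Your argument is correct. The paper in fact states this proposition without any proof, treating it as an immediate consequence of the preceding proposition; your write-up supplies exactly the standard justification that is implicitly intended: per-$U$ almost-sure convergence of the bounded i.i.d.\ sample averages, upgraded to simultaneous convergence over the finitely many sets $U\subseteq X$ with $|U|\le k$ via a finite intersection of probability-one events, followed by the gap argument showing the empirical maximizer returned by \Select{} is eventually a true optimizer. Your closing caveat is also the right one — since $\DefEU^{(\mathsf{T})}(U)$ is random, the proposition should indeed be read as an almost-sure (or high-probability) guarantee, a precision the paper's one-line statement glosses over.
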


\begin{algorithm}[!htbp]
	\caption{$\mathsf{Search}(U')$}\label{AlgoHeuristic}
	\begin{algorithmic}[1]
	    \If {$|U'|=k$}
%	        \State $S\gets$ Calculate $\DefEU(U')$
	        \State Update $\OPT$ with $(U', \DefEU(U'))$
	        \State \Return
	    \EndIf
%	    \For{$u\in X$} 
%	    \State Calculate $\DefEU(U'\cup \{u\})$
%	    \EndFor 
	    \State $u_1\gets \arg\max_{u\in X}\DefEU(U'\cup \{u\})$
	    \State $u_2\gets \arg\max_{u\in X\setminus \{u_1\}}\DefEU(U'\cup \{u\})$
        \State $\mathsf{Search}(U'\cup \{u_1\})$, $\mathsf{Search}(U'\cup \{u_2\})$
		%\State $S\gets \DefEU(U)$
		%\State Update $\mathsf{OPT}$ with $(U,S)$
		%\State \Return $\mathsf{OPT}$
	\end{algorithmic}
\end{algorithm}

%We can modify \Heuristic so that only $u_1$ is considered. This would be exactly the same as a greedy algorithm that step by step adds an informant the leads to the largest utility. However, this is not a submodular maximization problem~\cite{nemhauser1978analysis}, since $\DefEU(U)$ is not a submodular set function in general. As a result, the greedy algorithm might not guarantee an approximation ratio of $1-1/e.$
Based on existing results in submodular optimization ~\cite{nemhauser1978analysis}, one may expect a greedy algorithm that step by step adds an informant that leads to the largest utility to work well. However, the set function $\DefEU(U)$ in our problem violates submodularity (see Appendix \ref{app:submodular}) and such greedy algorithm will not guarantee an approximation ratio of $1-1/e$. 
Therefore, we propose \Heuristic (Greedy-based Search Algorithm) for the selection of informants as shown in Algorithm \ref{AlgoHeuristic}. \Heuristic starts by calling $\mathsf{Search}(\emptyset)$. While $|U'|<k$, $\mathsf{Search}(U')$ expands the current set of informants $U'$ by adding $u_1,u_2$ to $U'$ and recursing, where $u_1$ and $u_2$ are the two informants that give the largest marginal gain in $\DefEU$ (line 4-5); Otherwise, it updates the optimal solution with $U'$ (line 1-3).
%We discuss shortcomings of the standard greedy approach of selecting the informant leading to the highest utility gain ($u_1$) and the violation of submodularity in Appendix E. 

%Though \BF runs in exponential time, the calculation of $\DefEU(U)$ can be done in polynomial time for the special case of SISI, i.e., the information sharing intensity is strong ($w_{uv}=1\, \forall (u,v)\in E$). This represents the case where the informants have strong connections with a particular group of attackers and can get full access to their attack plans.

We identify a tractable case to conclude the section.
\begin{restatable}[]{lemmma}{lemSISI}
\label{lem:SISI}
	Given the set of recruited informants $U$, the defender's expected utility $\DefEU(U)$ can be computed in polynomial time if $w_{uv}=1\, \forall (u,v)\in E$. When $k$ is a constant, the optimal set of informants can be computed in polynomial time.
\end{restatable}

This represents the case where the informants have strong connections with a particular group of attackers and can get full access to their attack plans. We refer to the property of $w_{uv}=1$ for all $u,v$ as SISI (Strong Information Sharing Intensity).
Denote by \SpeCase (Algorithm for SISI) the polynomial-time algorithm in Lemma \ref{lem:SISI}. We provide more details about the SISI case in Appendix \ref{app:prooflemSISI} and \ref{app:asisi}.

\iffalse
\begin{proposition}
	When $k$ is a constant, the optimal set of informants and the optimal defender's utility can be computed in polynomial time by using \SpeCase to calculate $\DefEU(U)$ if $w_{uv}=1$ for all $u,v$.
\end{proposition}
\fi
%The proof of Lemma \ref{lem:SISI} is deferred to Appendix C.
 %See Appendix D for the pseudocode of \SpeCase.
%the optimal defenders' strategy and utility can be computed in polynomial time given the set of hired informants. 

\iffalse
\begin{proof}

Since $w_{uv}=1$ for all $u,v$, we have $\tilde{p}_v(V_0)=0$ for each $v\in V\setminus V_0$ given $V_0$. Therefore, the expected gain of target $j$ with $\tilde{y}$ reported attacks can be written as  $\EG_j=(\tilde{y}+q_{j}\sum_{v\in Y\setminus V}p_v) (R_j^d-P_j^d)$ and the calculation of $f(\cdot)$ depends only on the size of $|V_0|$. Thus, instead of enumerating $V_0$, we enumerate $0\leq t_0\leq |V|$ as the size of $V_0$ in line 2 of Algorithm \ref{AlgoBF}, and replace $P_{V_0}$ in Algorithm \ref{AlgoBF} with $P_{t_0}$, where $P_{t_0}=\Pr[|V_0|=t_0|U]$ can be obtained by expanding the following polynomial $\prod_{v\in V}(1-p_v+p_vx)=\sum_{i=0}^{|V|}P_{i}x^i.$
Therefore, $\DefEU(U)$ can be calculated in $O(n^2r|Y|^4)$ time.
%Algorithm \ref{DefEU} shows the calculation of the defender's expected utility given the set of hired informants $U$. 

% An observation is that when we are given the targeted locations of a subset of attackers by the informants, for each target $i\in T$ the defender should calculate the expected utility  $EU_i^c$ if $i$ is covered and $EU_i^u$ if not,  then allocate the resources to $r$ of the targets with the highest expected gain $EG_i=EU_i^c-EU_i^u$. 
	
\end{proof}
\fi
 %\footnote{An anonymous link to the Appendix: \texttt{\url{https://www.dropbox.com/s/743x25b8owezwug/DMAIC_Appendix.pdf?dl=0}}.} 

We summarize the time complexity of all algorithms for computing the optimal $U$ in Table~\ref{table2} in the Appendix.

\iffalse

\begin{algorithm}[!htbp]
	\caption{$\DefEU'(U,C')$}\label{AlgoEst}
	\begin{algorithmic}[1]
		\For {$Y'\subseteq |Y|$, $|Y'|\leq C'$} 
		\State $S\gets \DefEU(U)$
		\State Update $\mathsf{OPT}$ with $(U,S)$
		\EndFor
		\State \Return $\mathsf{OPT}$
	\end{algorithmic}
\end{algorithm}
\begin{proof}

	All possible recruitment strategies can be enumerated in $O(|X|^k)$.
	Algorithm \ref{Main} shows the computation of the optimal solution. 
	By Lemma \ref{lem:SISI} and since $k$ is a constant, it is a poly-time algorithm.
	
\end{proof}
\fi

%\input{chapters/levelkattacker}

\section{Defending Against Level-$\infty$ Attackers}
\label{levelinf}
%\hta{First need to define type-$\infty$ attackers by proving the existence of the fixed point. Then discuss how to select informants (probably in a greedy fashion).}

As discussed in Section \ref{sec:level-k}, 
%Level-0 attackers do not adapt to the changing strategy of the defender due to tips received. However, even if the attackers are unaware of the existence of informants, they may still observe the impact of informants on the defender's resource allocation. 
a level-$\kappa$ attacker may keep adapting to the new marginal strategy formed by his current level of behavior. In this section, we first show in Theorem \ref{thm:fixedpoint} that there exists a fixed-point strategy for the attacker in our level-$\kappa$ response model, and then use that to define the level-$\infty$ attackers.  

We formulate the problem of finding the optimal defender's strategy for this case as a mathematical program. However, such a program can be too large to solve. We propose a novel technique that reduces the program to a bi-level optimization problem, with both levels much more tractable.

%Denote by $\mathbf{x}(\textbf{V})=(x_1(\textbf{V}),\ldots,x_n(\textbf{V}))$ the defender strategy when receiving tips $\textbf{V}.$

% \begin{restatable}[]{theorem}{thmfixedpoint}
% \label{thm:fixedpoint}
% 	Let $\chi=\{\bx:\bx\in[0,1]^{n},\mathbf{1}^{\mathsf{T}}\bx\leq r\}$. There exists $\bx^*\in \chi$ such that $\bx^*=\MS(\bx^*,\QR(\bx^*)).$
% \end{restatable}
% \begin{proof}
% Since $\chi$ is a compact convex set and $\MS(\bx,\QR(\bx))$ is a continuous function of $\bx$, by Brouwer fixed-point theorem, there exists $\bx^*\in \chi$ such that $\bx^*=\MS(\bx^*,\QR(\bx^*)).$
% \end{proof}

\begin{restatable}[]{theorem}{thmfixedpoint}
\label{thm:fixedpoint}
	Let $\Delta_n=\{\mathbf{q}\mid\mathbf{q}\in[0,1]^{n},\mathbf{1}^{\mathsf{T}}\mathbf{q}\leq 1\}$. Given defender's strategies $\mathbf{x}_0$ and $\mathbf{x}(\mathbf{V})$, there exists $\mathbf{q}^*\in \Delta_n$ such that $\mathbf{q}^*=\QR(\MS(\mathbf{x}_0,\mathbf{x},\mathbf{q}^*))$.
\end{restatable}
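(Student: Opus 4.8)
The plan is to establish the existence of the fixed point $\mathbf{q}^*$ via Brouwer's fixed-point theorem. First I would define the composite map $\Phi : \Delta_n \to \Delta_n$ by $\Phi(\mathbf{q}) = \QR(\MS(\mathbf{x}_0, \mathbf{x}, \mathbf{q}))$, where $\mathbf{x}_0$ and $\mathbf{x}(\cdot)$ are held fixed. To invoke Brouwer I need two ingredients: that $\Delta_n$ is a nonempty, compact, convex set, and that $\Phi$ is continuous on $\Delta_n$. The domain $\Delta_n = \{\mathbf{q} \mid \mathbf{q} \in [0,1]^n, \mathbf{1}^{\mathsf{T}} \mathbf{q} \leq 1\}$ is a truncated simplex, which is plainly nonempty, closed, bounded (hence compact), and convex as the intersection of half-spaces, so that part is routine.

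The substance of the argument is verifying that $\Phi$ maps $\Delta_n$ into itself and is continuous. For the range, I would observe from the definition in Equation~\eqref{eq:quantal_response} that $\QR(\mathbf{x}')$ always produces a vector of probabilities summing to at most one (in fact the $q'_i$ are nonnegative and their sum over $T$ is exactly one, so the image lies in $\Delta_n$), regardless of the argument; hence $\Phi(\mathbf{q}) \in \Delta_n$ automatically. For continuity I would decompose $\Phi$ as a composition of two maps and argue each is continuous. The map $\QR$ is continuous because the exponential function and the denominator $\sum_{j \in T} e^{\lambda[x'_j P^a_j + (1-x'_j) R^a_j]}$ are continuous and the denominator is bounded away from zero (it is a sum of strictly positive terms, so never vanishes). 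The remaining and more delicate piece is the marginal-strategy map $\MS(\mathbf{x}_0, \mathbf{x}, \cdot)$: I must show that as $\mathbf{q}$ varies, the induced marginal coverage $\hat{\mathbf{x}} = \MS(\mathbf{x}_0, \mathbf{x}, \mathbf{q})$ varies continuously.

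The main obstacle, therefore, is pinning down that $\MS$ is continuous in its third argument. The marginal strategy is an expectation over tip realizations $\mathbf{V}$, where the probability of each realization depends on $\mathbf{q}$ (through the attack probabilities $q_i$) and the defender responds with a fixed, possibly discontinuous, allocation rule $\mathbf{x}(\mathbf{V})$. The key point I would exploit is that although each individual response $\mathbf{x}(\mathbf{V})$ may be a fixed vector (and the greedy allocation of Theorem~\ref{thmalloc} need not vary continuously), the marginal $\hat{\mathbf{x}}$ is a finite sum of such vectors weighted by the probabilities of the tip configurations, and these probabilities—being finite products and sums of the continuous attack probabilities $\tilde{p}_v$ and the $q_i$—depend continuously (indeed polynomially) on $\mathbf{q}$. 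Since a finite weighted combination of constant vectors with continuously-varying nonnegative weights is continuous in the weights, $\MS(\mathbf{x}_0, \mathbf{x}, \cdot)$ is continuous. I would write $\hat{x}_i$ explicitly as such a weighted sum to make this transparent, taking care that the weights indeed sum to one and the discontinuity of the allocation rule does not propagate because the rule itself does not depend on $\mathbf{q}$—only the mixing weights do.

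With both maps shown continuous, $\Phi$ is continuous as their composition, and it maps the nonempty compact convex set $\Delta_n$ into itself. Brouwer's fixed-point theorem then yields a $\mathbf{q}^* \in \Delta_n$ with $\mathbf{q}^* = \Phi(\mathbf{q}^*) = \QR(\MS(\mathbf{x}_0, \mathbf{x}, \mathbf{q}^*))$, which is exactly the claimed fixed point. I anticipate the bulk of the write-up will be the continuity of $\MS$, so I would isolate that as a preliminary lemma and keep the topological invocation short.
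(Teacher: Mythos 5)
Your proposal follows exactly the paper's argument: the paper's proof is a one-liner invoking Brouwer's fixed-point theorem on the compact convex set $\Delta_n$, asserting continuity of $\mathbf{q}\mapsto\QR(\MS(\mathbf{x}_0,\mathbf{x},\mathbf{q}))$ without elaboration. Your write-up is correct and simply fills in the details the paper leaves implicit — most usefully, that the marginal map $\MS$ is continuous because the fixed responses $\mathbf{x}(\mathbf{V})$ are mixed with weights that depend polynomially on $\mathbf{q}$, so the possibly discontinuous allocation rule does not threaten continuity.
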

\begin{proof}
Since $\Delta_n$ is a compact convex set and $\QR(\MS(\mathbf{x}_0,\mathbf{x},\mathbf{q}^*))$ is a continuous function of $\mathbf{q}$, by Brouwer fixed-point theorem, there exists $\mathbf{q}^*\in \Delta_n$ such that $\mathbf{q}^*=\QR(\MS(\mathbf{x}_0,\mathbf{x},\mathbf{q}^*))$.
\end{proof}
%Consider attackers that have had experience playing against the defender. Over time, the attacker might start to consider their expected utility in practice, which is affected by informants. 
According to the definition of level-$\kappa$ attackers, we have $\mathbf{q}^{\kappa+1}=\QR(\MS(\mathbf{x}_0,\mathbf{x},\mathbf{q}^{\kappa}))$. Slightly generalizing the definition, we define a level-$\infty$ attacker as:
\begin{definition}[level-$\infty$ attacker]
\label{def:level_inf}
    Given the defender's strategies $\mathbf{x}_0$ and $\mathbf{x}(\mathbf{V})$, the strategy $\mathbf{q}$ of a level-$\infty$  attacker satisfies $\mathbf{q}=\QR(\MS(\mathbf{x}_0,\mathbf{x},\mathbf{q}))$.
\end{definition}

%the attacker may observe that the marginal coverage prob is not what they thought it is (with their initial belief being the regular patrol strategy).  But as time goes by, they further notice the marginal coverage prob is again different from their current belief (MS of level k-1) and they start to adapt to the new belief (MS of level k). 

%Theorem \ref{thm:fixedpoint} says that if the defender adopts the fixed-point strategy when she is not informed, this strategy is at the same time the defender's marginal strategy. Under this circumstance, the attackers will quantal response to the marginal strategy, and we define type-$\infty$ attackers to be the attackers who can observe and quantal response to the marginal strategy. 
%According to Theorem \ref{thm:fixedpoint}, which follows directly by Brouwer fixed-point theorem, such a $\mathbf{q}$ always exists. 
\begin{remark}
Note that Definition \ref{def:level_inf} is not obtained by taking the limit of the level-$\kappa$ definition, since such a limit may not even exist (see Example \ref{example:non_convergence} in Appendix \ref{app:counter-example}). %Instead, it is a generalization of the level-$\kappa$ definition. 
\end{remark}

\begin{remark}

Although the level-$\infty$ attacker is defined through a fixed point argument, we still stick to the Stackelberg assumption: the defender leads and the attacker follows. Notice that in the equation $\mathbf{q}=\QR(\MS(\mathbf{x}_0,\mathbf{x},\mathbf{q}))$, $\mathbf{q}$ will only be defined after the defender commits to strategies $\mathbf{x}_0$ and $\mathbf{x}$. However, it is different from the standard Strong Stackelberg Equilibrium \cite{korzhyk2011stackelberg} in that the attacker is following a level-$\infty$ response model, as defined by the fixed point equation. %Level-$\infty$ attackers correspond to an equilibrium state where the attackers are using quantal response against the defender's marginal strategy while the marginal strategy is consistent with the attacker's response. 

Also, as we will discuss in Section \ref{sec:bi_level_comparison} on our experiments, when $r=n$, the defender's optimal strategy is not to use up all the available resources. This is clearly different from a Nash equilibrium, as the defender still has incentives to use more resources.
\end{remark}

\subsection{Convergence Condition for the Level-$\kappa$ Response Model}
%Then when do level-$\kappa$ agents converge to level-$\infty$ agents? 

We focus on the single-attacker case, where there are only $n$ different types of tips. We use $\mathbf{V}_i$ to denote the tips where the attacker is reported to attack target $i$. When the attacker is using strategy $q$, the probability of receiving $\mathbf{V}_i$ is $\Pr\{\mathbf{V}_i\}=wq_i$.

\begin{theorem}
	\label{thm:convergence}
	Let $\bar{x}_i=\max_j\{x_i(\mathbf{V}_j)\}$. In the single attacker case, if there exists constant $L\in [0,1)$, such that $\bar{x}_i\le \frac{L}{n\lambda(R^a_i-P^a_i)},\forall i$, then level-$\kappa$ agents converge to level-$\infty$ agents as $\kappa$ approaches infinity.
\end{theorem}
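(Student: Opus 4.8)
The plan is to show that the level-$\kappa$ update map is a contraction on $\Delta_n$ (or at least on the relevant coordinate vector $\mathbf{q}$) under the stated smallness condition on $\bar{x}_i$, and then invoke the Banach fixed-point theorem together with Theorem~\ref{thm:fixedpoint} to conclude that the iterates $\mathbf{q}^\kappa$ converge to the unique fixed point $\mathbf{q}^*$, which by Definition~\ref{def:level_inf} is exactly the level-$\infty$ strategy. Concretely, define the map $\Phi(\mathbf{q}) = \QR(\MS(\mathbf{x}_0,\mathbf{x},\mathbf{q}))$, so that $\mathbf{q}^{\kappa+1} = \Phi(\mathbf{q}^\kappa)$. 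Since Theorem~\ref{thm:fixedpoint} already guarantees a fixed point exists, it suffices to prove $\Phi$ is Lipschitz with constant $L < 1$; convergence of the iterates and uniqueness of the fixed point then follow immediately.

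The main work is a chain-rule estimate of $\|\Phi(\mathbf{q}) - \Phi(\mathbf{q}')\|$. First I would decompose the Lipschitz constant of $\Phi$ through the composition: bound the sensitivity of the induced marginal strategy $\hat{\mathbf{x}} = \MS(\mathbf{x}_0,\mathbf{x},\mathbf{q})$ with respect to $\mathbf{q}$, then bound the sensitivity of $\QR$ with respect to its coverage-probability argument $\hat{\mathbf{x}}$. In the single-attacker case the marginal coverage is a convex combination of the routine strategy $\mathbf{x}_0$ and the informed strategies $\mathbf{x}(\mathbf{V}_j)$, with mixing weights that are linear (through $\Pr\{\mathbf{V}_i\} = w q_i$) in the components of $\mathbf{q}$; hence $\partial \hat{x}_i / \partial q_j$ is controlled by the spread of the informed coverage values, which is where the quantity $\bar{x}_i = \max_j x_i(\mathbf{V}_j)$ enters. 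For the $\QR$ factor, I would differentiate Equation~\eqref{eq:quantal_response}: each partial $\partial q_i / \partial x_j$ carries a factor $\lambda(R^a_i - P^a_i)$ from the exponent, and the softmax structure keeps the remaining factors bounded by products of probabilities summing to at most one. Multiplying the two stages and summing over targets is what produces the threshold $\bar{x}_i \le L / \bigl(n\lambda(R^a_i - P^a_i)\bigr)$: the $n$ absorbs the sum over coordinates, the $\lambda(R^a_i - P^a_i)$ cancels the derivative factor from $\QR$, and the residual bound is exactly $L < 1$.

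The step I expect to be the main obstacle is obtaining a clean, tight bound on $\|\partial \hat{\mathbf{x}}/\partial \mathbf{q}\|$ for the marginal-strategy map $\MS$, because its precise form is never written out in closed form in the excerpt; I would need to unpack how the defender's optimal informed allocation $\mathbf{x}(\mathbf{V})$ (which by Theorem~\ref{thmalloc} is a greedy, hence piecewise-constant, function of the tips) aggregates into $\hat{\mathbf{x}}$. Because $\mathbf{x}(\mathbf{V}_j)$ is fixed once the defender commits, the dependence of $\hat{\mathbf{x}}$ on $\mathbf{q}$ is in fact smooth (only the mixing weights $\Pr\{\mathbf{V}_i\} = wq_i$ vary), which sidesteps the non-differentiability of the greedy allocation itself; pinning down that the effective Jacobian is bounded by $\max_i \bar{x}_i$ up to the factor $w \le 1$ is the technical crux. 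Once that bound is in hand, the composition estimate collapses to $L$, and the Banach argument finishes the proof.
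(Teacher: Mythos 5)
Your proposal is correct and follows essentially the same route as the paper: the paper reduces the theorem to Lemma~\ref{lem:lipschitz}, which establishes exactly the contraction you aim for via the same chain-rule Jacobian computation through $\hat{\mathbf{x}}=\MS(\mathbf{x}_0,\mathbf{x},\mathbf{q})$ (smooth in $\mathbf{q}$ because only the mixing weights $wq_j$ vary, as you note) and the softmax derivative of $\QR$, yielding $\left|\partial g_i/\partial q_j\right|<L/n$ and hence $L$-Lipschitzness in the $L^1$-norm. Convergence of the level-$\kappa$ iterates to the level-$\infty$ fixed point then follows by the Banach-type argument you describe.
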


The proof of Theorem \ref{thm:convergence} is omitted since it is immediate from the following lemma:
\begin{restatable}[]{lemmma}{lemLipschitz}
	\label{lem:lipschitz}
	In the single attacker case, if there exists constant $L\in [0,1)$, such that $\bar{x}_i\le \frac{L}{n\lambda(R^a_i-P^a_i)}$ for all $i$, then $g(\mathbf{q})$ is $L$-Lipschitz with respect to the $L^1$-norm, i.e., $g(\mathbf{q})$ is a contraction.
\end{restatable}
The proof of Lemma \ref{lem:lipschitz} is deferred to Appendix \ref{app:ipschitz-proof}.

\begin{corollary}
In the single attacker case, if there exists a constant $L\in[0,1)$, such that $\frac{L}{n\lambda(R^a_i-P^a_i)}>1, \forall i$, then level-$\kappa$ agents converge to level-$\infty$ agents as $\kappa$ goes to infinity.
\end{corollary}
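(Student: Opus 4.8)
The plan is to derive the corollary directly from Theorem~\ref{thm:convergence}, showing that the corollary's hypothesis is in fact a \emph{stronger} and strategy-independent version of the contraction condition already established there. The only quantity that needs controlling is $\bar{x}_i=\max_j\{x_i(\mathbf{V}_j)\}$, and the essential observation is that each $x_i(\mathbf{V}_j)$ is a coverage probability, hence lies in $[0,1]$; consequently $\bar{x}_i\le 1$ regardless of which defender strategies $\mathbf{x}_0$ and $\mathbf{x}(\mathbf{V})$ are chosen. This single bound is what lets the purely parametric condition of the corollary imply the (a~priori strategy-dependent) premise of the theorem.

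With this in hand, the argument reduces to a short chain of inequalities. Under the corollary's assumption that $\frac{L}{n\lambda(R^a_i-P^a_i)}>1$ for every $i$ with some $L\in[0,1)$, I would combine this with $\bar{x}_i\le 1$ to obtain
\begin{gather*}
\bar{x}_i\le 1<\frac{L}{n\lambda(R^a_i-P^a_i)}\qquad\text{for all }i,
\end{gather*}
so in particular the inequality $\bar{x}_i\le \frac{L}{n\lambda(R^a_i-P^a_i)}$ holds for the same constant $L\in[0,1)$. But this is exactly the hypothesis required by Theorem~\ref{thm:convergence} (equivalently, the condition under which $g(\mathbf{q})$ is a contraction by Lemma~\ref{lem:lipschitz}). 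Invoking that theorem then immediately yields convergence of level-$\kappa$ agents to level-$\infty$ agents as $\kappa\to\infty$, completing the proof.

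There is essentially no technical obstacle here: the corollary is a convenient restatement that removes any dependence on the particular defender strategies by exploiting the boundedness of probabilities. The only point worth an explicit sentence is precisely $\bar{x}_i\le 1$, since it is what guarantees the strategy-dependent bound $\bar{x}_i\le \frac{L}{n\lambda(R^a_i-P^a_i)}$ automatically whenever the parametric inequality $\frac{L}{n\lambda(R^a_i-P^a_i)}>1$ holds. Put differently, when the reward gaps $R^a_i-P^a_i$ are small enough relative to $n\lambda$ that the Lipschitz bound exceeds one, the contraction estimate of Lemma~\ref{lem:lipschitz} holds for \emph{every} feasible defender strategy, so convergence is unconditional in the strategies.
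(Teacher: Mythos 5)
Your proof is correct and is essentially the paper's own (implicit) argument: the paper states this corollary without a separate proof precisely because, as you observe, every coverage probability satisfies $\bar{x}_i \le 1$, so the parametric condition $\frac{L}{n\lambda(R^a_i-P^a_i)}>1$ yields $\bar{x}_i \le 1 < \frac{L}{n\lambda(R^a_i-P^a_i)}$ for all $i$, which is exactly the hypothesis of Theorem~\ref{thm:convergence} with the same $L\in[0,1)$. Your added remark that this makes the contraction of Lemma~\ref{lem:lipschitz} hold for every feasible defender strategy is the correct reading of the corollary's purpose.
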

%\begin{proof}
%    Immediate from Theorem \ref{thm:convergence}.
%\end{proof}

\subsection{A Bi-Level Optimization for Solving the Optimal Defender's Strategy }
In this section, we still consider the single attacker case and assume the defender has $r \geq 1$ resources. Clearly, the optimal set of informants should contain the ones with the highest information sharing intensities. It remains to compute the optimal strategies $\mathbf{x}_0$ and $\mathbf{x}(\mathbf{V})$.
Given the optimal set of informants $U^*$, the probability of receiving a tip is $w = 1 - \prod_{u \in U^*} (1 - w_{u1})$.
% Since there is only one attacker, it is optimal to simply maximize $w$ by greedily choosing the $k$ informants with the largest information sharing intensity $w_{u1}$.
% We then consider the problem of computing the optimal defender strategy when given the set of informants $U$. 
Let $\Pr\{\mathbf{V}\}$ be the probability of receiving tips $\mathbf{V}$, which depends $\mathbf{q}$. Let $\mathbf{x}(\mathbf{V})=(x_1(\mathbf{V}),\ldots,x_n(\mathbf{V}))$ be the defender strategy when receiving tips $\mathbf{V}$.

\iffalse
\begin{itemize}
    \item $p_v$: the probability that the attacker attacks;
    \item $w$: the probability that the attacker is reported given that he attacks;
    \item $\mathbf{V}$: the set of tips;
    \item $\Pr\{\mathbf{V}\}$: the probability of receiving tip combination $\mathbf{V}$;
    \item $t_i$: number of attackers who are going to attack target $i$ in $\mathbf{V}$;
    \item $x(\mathbf{V})$: defender strategy when receiving tips $\mathbf{V}$;
\end{itemize}

Let $\tilde{p}_v$ be the posterior probability that the attacker attacks given that the defender receives no tip:
\begin{gather*}
    \tilde{p}_v=\frac{(1-w)p_v}{(1-w)p_v+1-p_v}=\frac{(1-w)p_v}{1-wp_v}.
\end{gather*}
\fi
% Then the defender's marginal strategy $\hat{\bx}$ can be written as
% $\hat{\mathbf{x}}=\sum_{\mathbf{V}}\Pr\{\mathbf{V}\}\mathbf{x}(\mathbf{V})$.
% With slight abuse of notation, we let $\QR(\hat{\bx})=(q_1,\ldots,q_n)$.
Let $\mathbf{q}=(q_1,\ldots,q_n)$ be the strategy of the level-$\infty$ attacker.
Given $\mathbf{V}$ and the corresponding $t_i$'s, the expected number of attackers that are going to attack target $i$ is $d_i=t_i+(1-\sum_j t_j)\tilde{p}_v(\emptyset)q_i$.
%where $q_i$ is the attacker's strategy $q=QR(\hat{x})$.
Therefore, given $\hat{\bx}$ we have the defender's expected utility $\DefEU(\mathbf{x}_0,\mathbf{x})$ as
\begin{gather*}
     \DefEU(\mathbf{x}_0,\mathbf{x})=\textstyle\sum_{\mathbf{V},i}\Pr\{\mathbf{V}\}d_i\left[ P_i^d+x_i(\mathbf{V})\left( R_i^d-P_i^d \right) \right].
\end{gather*}
Then the problem of finding the optimal defender strategy can be formulated as the following mathematical program:
\begin{align*}
    \text{max\quad} \DefEU(\mathbf{x}_0,\mathbf{x})
\quad \text{s.t.\quad} \mathbf{q}=\QR(\MS(\mathbf{x}_0,\mathbf{x},\mathbf{q})).
\end{align*}
In the single-attacker case, we need $n$ and $n^2$ variables to represent $\mathbf{x}_0$ and $\mathbf{x}$. We can use the \texttt{QRI-MILP} algorithm\footnote{An algorithm that computes an approximate defender's optimal strategy against a variant of level-0 attackers who take into account the impact of informants when determining the target they attack. See Appendix \ref{app:infawareatt} for more details.} to find the solution. However, this approach needs to solve a mixed integer program and does not scale well. %as we need exponentially many variables to represent $\mathbf{x}$.

%The \texttt{QRI-MILP} algorithm computes an approximate defender's optimal strategy against a variant of level-0 attackers who take into account the impact of informants when determining the target they attack (see Appendix G for more details).

To tackle the problem, we focus on the defender's marginal strategy instead of the full strategy representation, and decompose the above program into a bi-level optimization problem. 

Let $\hat{\mathbf{x}}=\MS(\mathbf{x}_0,\mathbf{x},\mathbf{q})=\sum_{\mathbf{V}}\Pr\{\mathbf{V}\}\mathbf{x}(\mathbf{V})$, where we slightly abuse notation and use $\mathbf{V}=\emptyset$ to denote the case of receiving no tip, $\mathbf{x}(\emptyset)$ to denote $\mathbf{x}_0$. 
The bi-level optimization method works as follows. At the inner level, we fix an arbitrary feasible $\hat{\mathbf{x}}$, and solve the following mathematical program:
\begin{align*}
    %\text{max\quad}&\sum_{\mathbf{V}}\Pr\{\mathbf{V}\}\sum_i\left[ t_i+\left(1-\sum_it_i\right)\tilde{p}_vq_i \right]\left[ P_i^d+x_i(\mathbf{V})\left( R_i^d-P_i^d \right) \right]\nonumber\\
    \text{max\quad}& \DefEU(\hat{\mathbf{x}})\nonumber\\
    \text{s.t.\quad}&\textstyle\sum_{\mathbf{V}}\Pr\{\mathbf{V}\}\bx(\mathbf{V})=\hat{\mathbf{x}},\,\mathbf{q}=\QR(\hat{\bx})\\
    &\mathbf{1}^{\mathsf{T}}\bx(\mathbf{V})\le r, \bx(\mathbf{V})\in [0,1]^n, \forall \mathbf{V}%\\
    %&0\le x_i(\mathbf{V})\le 1, \forall \mathbf{V}, \forall i
\end{align*}
Since $\hat{\mathbf{x}}$ is fixed, $\mathbf{q}$ and $\Pr\{\mathbf{V}\}$ are also fixed. Thus, the  program above becomes a linear program, with $\mathbf{x}(\mathbf{V})$ as variables. We can always find a feasible solution to it by simply setting $\mathbf{x}(\mathbf{V})=\hat{\mathbf{x}}, \forall \mathbf{V}$. 
%As we are only considering the single attacker case, there are only $n$ possible $\mathbf{V}$'s.
Solving this linear program gives us the optimal defender's utility $\DefEU(\hat{\mathbf{x}})$ for any possible $\hat{\mathbf{x}}$. To find the optimal defender strategy, we solve the outer-level optimization problem below:
\begin{align*}
    \text{max\quad} \DefEU(\hat{\mathbf{x}})
    \text{\quad s.t.\quad} \hat{\mathbf{x}} \text{ is feasible.}
\end{align*}
Since the feasible region of $\hat{\mathbf{x}}$ is continuous, we can use any known algorithm (e.g., gradient descent) to solve the outer-level program.
\iffalse
we change $\hat{x}$ a little bit to $\hat{x}+\varepsilon$, and compute $DefEU(\hat{x}+\varepsilon)$ again. Then, the following
\begin{gather*}
    \frac{DefEU(\hat{x})-DefEU(\hat{x}+\varepsilon)}{\varepsilon}
\end{gather*}
can be used to approximate the gradient of $DefEU$ with respect to $\hat{x}$ (or use some kind of sensitivity analysis), and apply gradient ascend algorithm to optimize $DefEU$. 

The problem of this method is that it can only give us a locally optimal solution after convergence.
\fi
The inner-level linear program still suffers from the scalability problem. However, when there are multiple attackers, the optimal objective value can be well-approximated by simply sampling a subset of possible $\mathbf{V}$'s, or focusing only on the $\mathbf{V}$'s with the highest probabilities. For those $\mathbf{V}$'s that are not considered, we can always use $\mathbf{x}_0$ as the default strategy for $\mathbf{x}(\mathbf{V})$.
% In the multiple attackers case, we cannot afford to enumerate all possible $\mathbf{V}$'s, since there are exponentially many. We can then sample a fixed number of $\textbf{V}$'s and only solve the linear program for these $\textbf{V}$'s. By ignoring those $\textbf{V}$'s with very small probabilities, we may be able to get a good approximation of the actual $\DefEU$ and also the optimal strategy.
\section{Defending Against Informant-Aware Attackers}
\label{sec:informant-aware-attackers}
% \newcommand{\calT}{\mathcal{T}}
% In Section \ref{Sec:level0}, we focus on level-0 attackers that do not adapt to the changing marginal strategy due to tips received. 
%However, even if the attacker is unaware of the existence of informants, they may still observe the impact of informants on the defender's resource allocation. 
We now consider a variant of our model where attackers take into account the impact of informants when determining the target they attack. Specifically, we assume the attackers follow the QR behavior model but incorporate the probability of being discovered when determining their expected utility for attacking a target.\footnote{Consider attackers that have had experience playing against the defender. Over time, the attacker might start to consider their expected utility in practice, which is affected by informants.} In this setting, the attackers' subjective belief $\mathbf{x}'$ of the target coverage probability does not necessarily satisfy $\sum_i x'_i\le r$. Consider the example of a single attacker and a single informant with report intensity 1. Assume that the defender has $r=1$ and always protects the target being reported with probability 1. Then no matter which target the attacker chooses to attack, it will always be covered.

We focus on the single attacker case with $r \geq 1$. We first consider the problem of computing the optimal defender strategy when given the set of informants $U$ and associated probability of receiving a tip $w = 1 - \prod_{u \in U} (1 - w_{u1})$. In the general case with multiple attackers, we will need to specify the defender strategy for each combination of tips received. However, when there is only one attacker, we can succinctly describe the defender strategy by their default strategy without tips, $\mathbf{x}$, and their probability of defending a location after receiving a tip for that location, $\mathbf{z}$. Then, under the QR adversary model, the probability $q_i$ of the attacker targeting location $i$ will be:
\begin{gather*}
    q_i = \frac{e^{\lambda\left\{\left[(1 - w)x_i + wz_i\right]P^a_i + \left[1 - (1 - w)x_i - wz_i\right]R^a_i\right\}}}{\sum_{j \in \calT}e^{\lambda\left\{\left[(1 - w)x_j + wz_j\right]P^a_j + \left[1 - (1 - w)x_j - wz_j\right]R^a_j\right\}}}.
\end{gather*}

This leads to the following optimization problem, \texttt{QRI}, to compute the optimal defender strategy:

\begin{align}
    \max_{x, y, a} & \quad \frac{\sum_{i \in \calT} e^{\lambda R^a_i}e^{-\lambda(R^a_i - P^a_i)y_i}\left[(R^d_i - P^d_i)y_i + P^d_i\right]}{\sum_{i \in \calT} e^{\lambda R^a_i}e^{-\lambda(R^a_i - P^A_i)y_i}} \nonumber\\
    \text{subject to} & \quad y_i = (1 - w)x_i + wz_i, \quad  \forall i \in \calT \label{optln:y}\\
    & \quad \textstyle\sum_{i \in \calT} x_i \leq r \\
    %& \quad 0 \leq x_i \leq 1, \quad \forall i \in \calT \\
    & \quad 0 \leq x_i,z_i \leq 1, \quad \forall i \in \calT \label{optln:z}
\end{align}

We can compute the optimal defender strategy by adapting the approach used in the \Pasaq algorithm~\cite{yang2012computing}. The description of the algorithm is deferred to Appendix \ref{app:infawareatt}.

\section{Experiment}
\label{Sec:experiment}

\iffalse
\begin{figure}
	
	\begin{subfigure}[htbp]{0.49\textwidth}
	    \centering
		\includegraphics[height=4cm]{}
    	\caption{Comparison of defender utility against adaptive and non-adaptive attackers.}
    	\label{fig:adaptive-comp}
	\end{subfigure}
    \begin{subfigure}[htbp]{0.49\textwidth}
    	\centering
    	\includegraphics[height=4cm]{}
    	\caption{Comparisons of defender utility for adaptive case and non-adaptive case defender strategy against the adaptive attacker.}
    	\label{fig:simple-optimal}
    \end{subfigure}
    \caption{Experimental Results.}
\end{figure}

\fi

In this section, we demonstrate the effectiveness of our proposed algorithms through extensive experiments. 
% We first introduce the experiment setup and then present the experimental results.
% \subsection{Experiment Setup}
% \label{sec:exp-setup}
% All experiments are done on a 2.5GHz quad-core i7 CPU, and all algorithms are implemented in C++. 
In our experiments, all reported results are averaged over 30 randomly generated game instances. See Appendix \ref{app:expsetup} for details about generating game instances and parameters. Unless specified otherwise, all game instances are generated in this way.
%An instance of the game with fixed $|X|, |Y|$, $k$, $n$, $r$ is generated.

\subsection{Experimental Results}
We compare the scalability and the solution quality of \Select using \BF, \Ctrunc, \Sampling to obtain $\DefEU$  and \Heuristic for different settings of the problems against level-0 attackers.

First, we test the case where $\sum_{v\in Y}p_v<3$.
We set $|X|=6, k=4, n=8, r=3$ and enumerate $|Y|$ from 2 to 16.
The results are shown in Figure \ref{figbound}. We also include \Greedy as a baseline that always chooses the  informants that maximizes the probability of receiving tips. We can see that \Sampling performs the best in term of runtime, but fails to provide high-quality solutions. While \Ctrunc is slower than \Sampling, it performs the best with no error on all test cases.
However, when there is no restriction on $\sum_{v\in Y}p_v$, as shown in  Figure \ref{figunbound}, \Ctrunc performs badly, even worse than \Greedy for large $|Y|$,
while \Sampling performs a lot better and \Heuristic performs the best.
We also fix $|X|=7$, $|Y|=10, k=3, r=5$ and change the number of targets $n$ from 5 to 25 for $\sum_{v\in Y}p_v<3$. The results are shown in Figure \ref{figtargets}. \Heuristic is the fastest but provides slightly worse solutions than \Ctrunc does. The runtime of \Greedy is less than 0.3s for all instances tested.

%\hta{Tianyu, please add the description of the experiments for figure (e) and (g) here. And change the legends for both plots.} \gty{TODO: more description and explanation, will do}
We then perform a case study to show the trade-off between the optimal number of resources to allocate and the optimal number of informants to recruit with budget constraints when defending against level-0 attackers. We set $|X|=|Y|=n=6$ and generate an instance of the game. We set the cost of allocating one defensive resource $C_r = 3$ and the cost of hiring one informant $C_i=1$. Given a budget $B$, we can recruit $k$ informants and allocate $r$ resources when $k\cdot C_i+r\cdot C_r\leq B$. The trade-off between the optimal $k$ and $r$ is shown in Figure \ref{figbudcons}.
In the same instance, we study how the defender's utility would change by  increasing the number of recruited informants with fixed $r$. Given a fixed number of resources, the defender should recruit as many informants as possible. We can also see that assuming the defender can acquire sufficient resources, the importance of recruiting additional informants is diminished.
This result provides useful guidance to defenders such as conservation agencies in allocating their budget and recruiting informants.

We run additional experiments for the SISI case and do a case study to show the errors of the estimations for all $U\subseteq X$ on 2 instances. We present the results in Appendix \ref{app:additionalexp}.

\begin{figure*}[hbtp]
	\centering
	\begin{subfigure}[htbp]{0.48\textwidth}
		\centering
		\includegraphics[width=3.7cm]{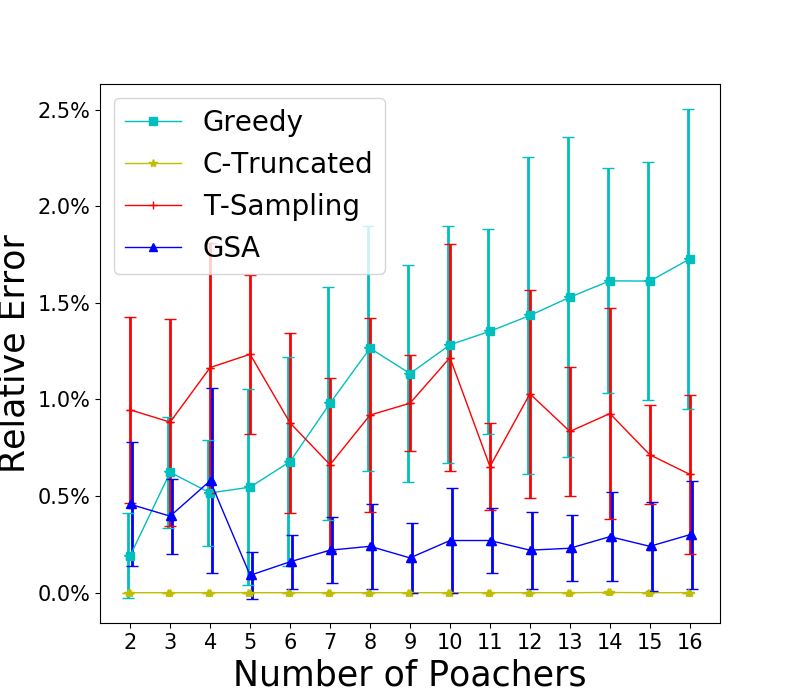}
		~
		\includegraphics[width=3.7cm]{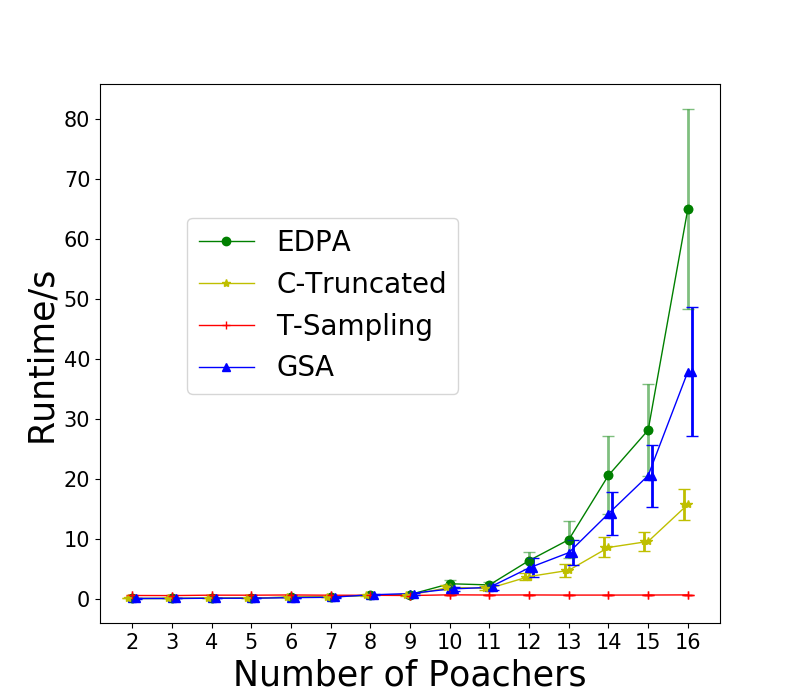}
		\caption{Runtime and solution quality increasing $|Y|$  with $\sum_{p_v}<3.$}
		\label{figbound}
	\end{subfigure}
	~
	\begin{subfigure}[htbp]{0.48\textwidth}
		\centering
		\includegraphics[width=4.0cm]{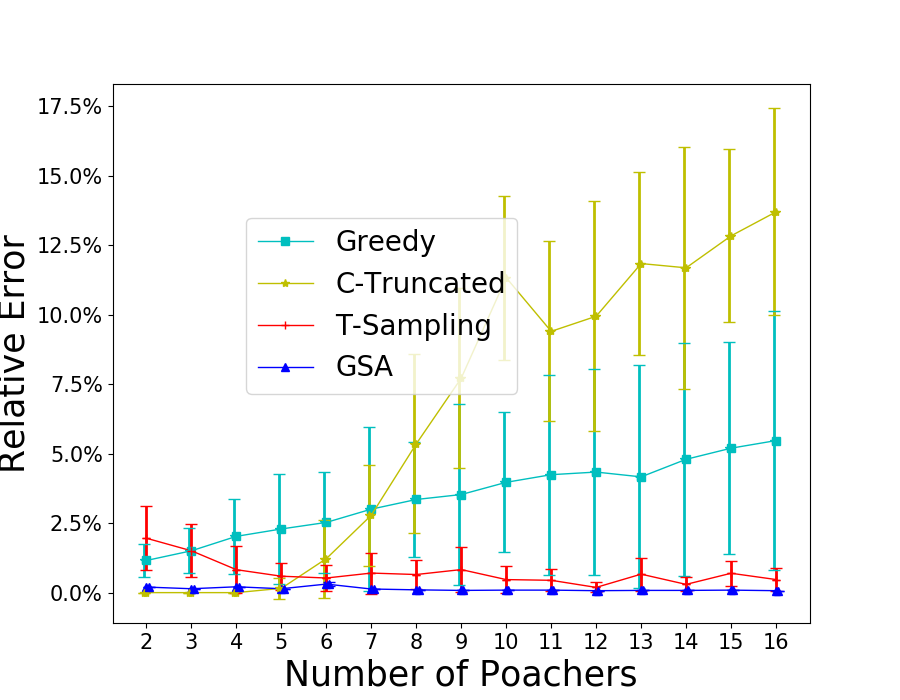}
		~
		\includegraphics[width=3.7cm]{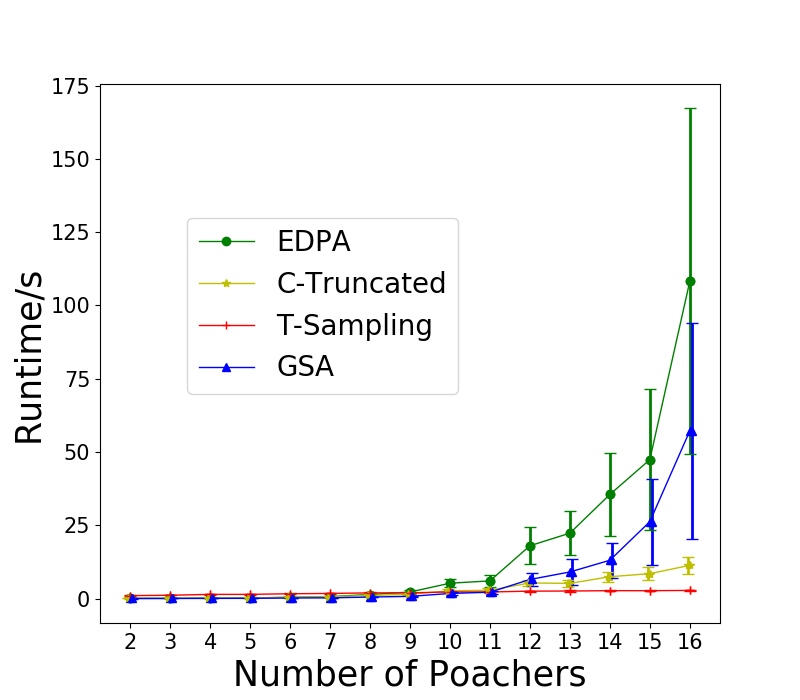}
		\caption{Runtime and solution quality increasing $|Y|$ for General Cases.}
		\label{figunbound}
	\end{subfigure}
	 
	\begin{subfigure}[htbp]{0.48\textwidth}
		\centering
		\includegraphics[width=4cm]{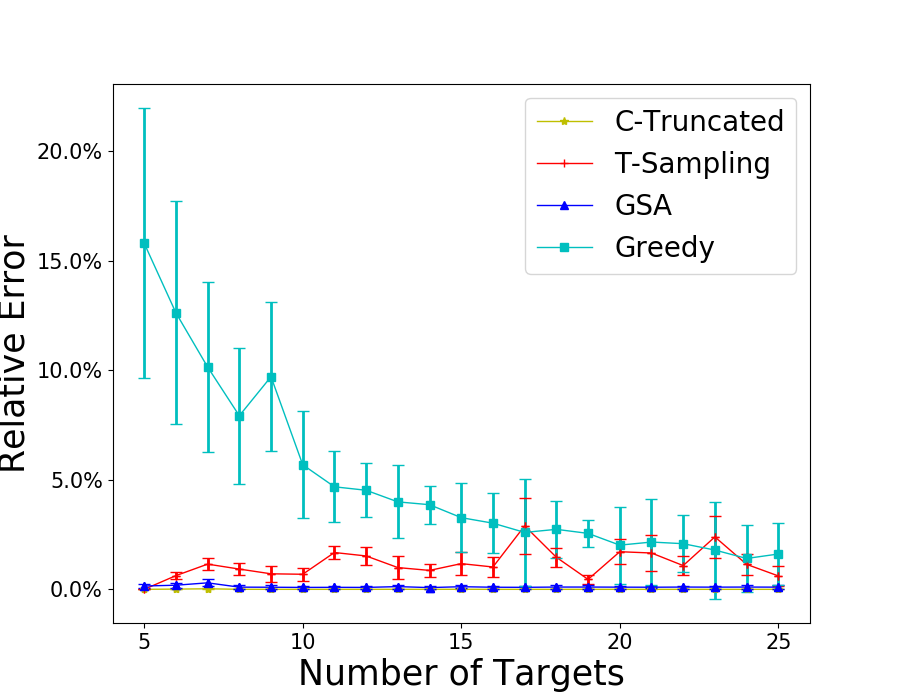}
		~
		\includegraphics[width=3.7cm]{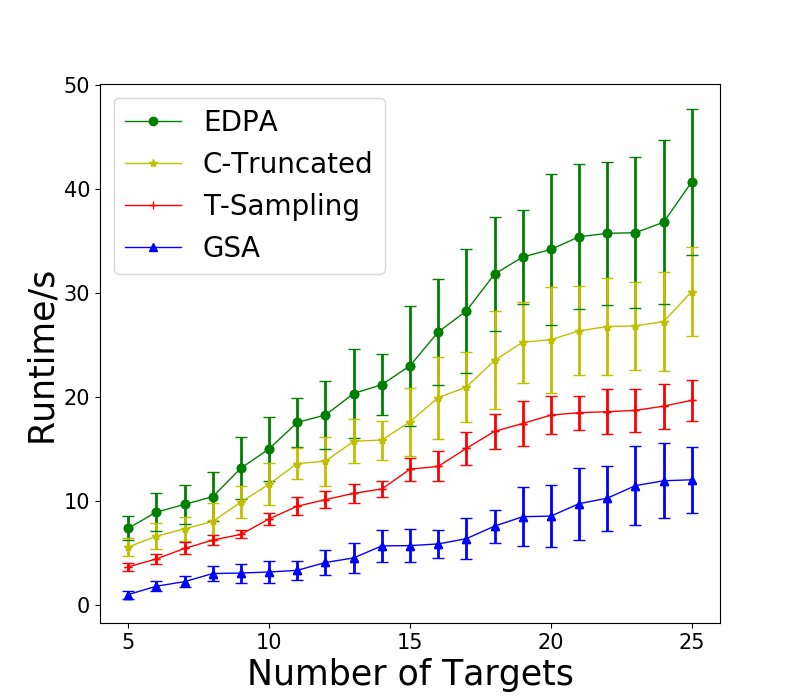}
		\caption{Runtime and solution quality increasing $n$  with $\sum_{p_v}<3.$}
		\label{figtargets}
	\end{subfigure}
	~
	\begin{subfigure}[htbp]{0.48\textwidth}
		\centering
    \includegraphics[width=4.0cm]{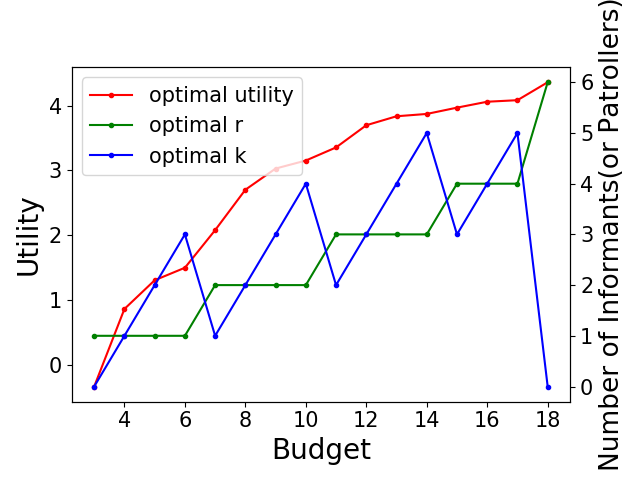}
    ~
    \includegraphics[height=2.7cm]{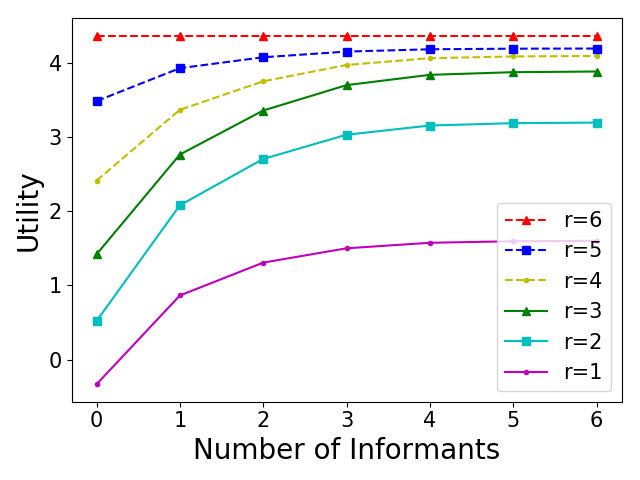}
        \caption{Trade-off between r and k, and increase of utility with fixed $r$ ($|X|=6$, $|Y|=6$, $n=6$).}
        \label{figbudcons}
	\end{subfigure}
	
	\begin{subfigure}[htbp]{0.3\textwidth}
	 \centering
		\includegraphics[height=4.0cm]{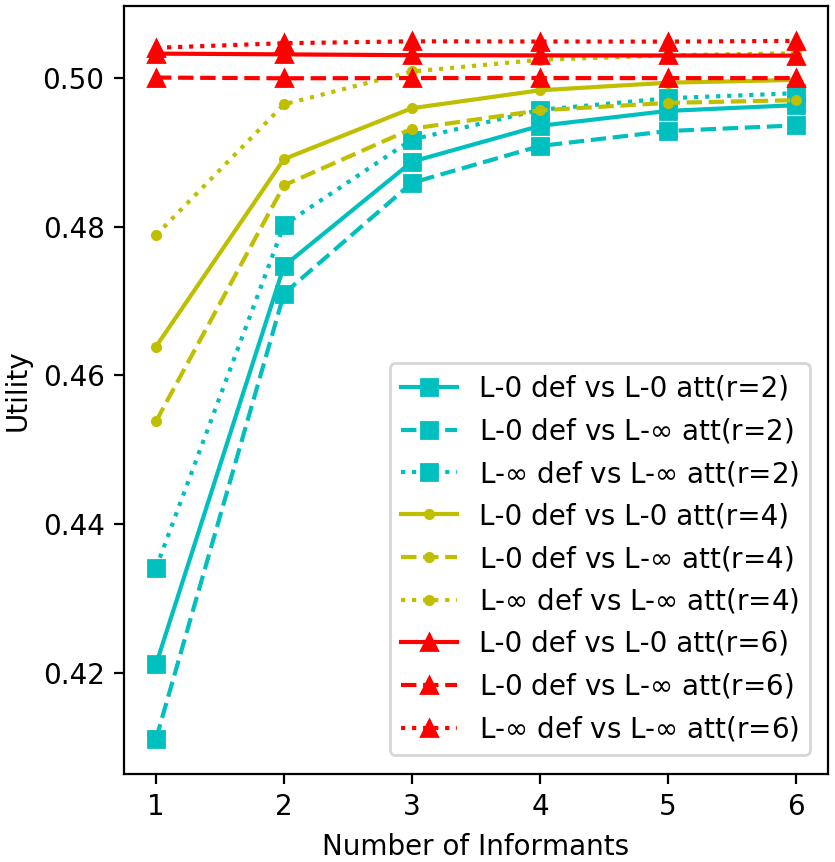}
    	\caption{Comparison between the defender utility against level-0 and level-$\infty$ attackers. ``L-$\infty$/0 def'' means that the defender uses the optimal strategy against a Level-$\infty$/0 attacker.}
    	\label{fig:level0infcomp}
    \end{subfigure}
	\hfill
	\begin{subfigure}[htbp]{0.35\textwidth}
	 \centering
		\includegraphics[height=4cm]{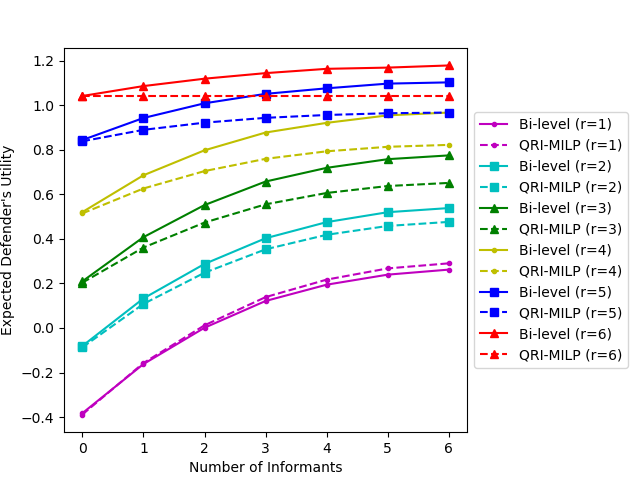}
    	\caption{Comparison between the bi-level optimization algorithm and \texttt{QRI-MILP}.}
    	\label{fig:algo-comp}
    \end{subfigure}
	\hfill
	\begin{subfigure}[htbp]{0.3\textwidth}
	    \centering
		\includegraphics[height=4cm]{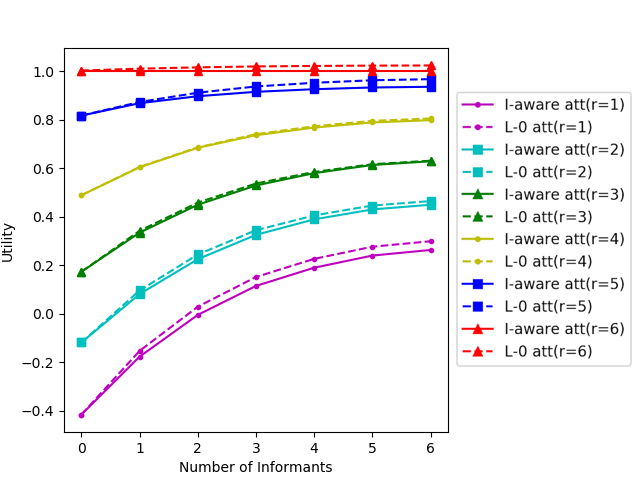}
    	\caption{Comparison between defender utility against level-0 and informant-aware attackers.}
    	\label{fig:adaptive-comp}
	\end{subfigure}
     \caption{Experimental Results.}
\end{figure*}

\subsubsection{Level-0 vs. Level-$\infty$ attackers}
We set $|X|=n=6$, $|Y|=p_1=1$, and $G_S$ to be fully connected. We set $r=2,4,6$ and vary $k$ from $1\ldots6$. 
We first fix the defender's strategy to the one against level-0 attackers and compare the utility achieved by the defender when defending against a level-0 attacker and a level-$\infty$ attacker. 
We show how the defender utility varies with the number of informants and defensive resources in Figure \ref{fig:level0infcomp}.
On average, we see that the defender utility against a level-$\infty$ attacker is lower than that against a level-0 attacker. 
We also show the utility of the defender using her optimal strategy against a level-$\infty$ attacker. We can see that when facing a level-$\infty$ attacker, the defender utility when using the optimal strategy is higher by a margin  than using the one against level-0 attackers.

\subsubsection{Level-0 vs. informant-aware defenders}
\label{sec:adaptvnonadapt}
We set $|X|=n=6$, $|Y|=p_1=1$, and $G_S$ to be fully connected. We vary $r$ from $1\ldots6$ and $k$ from $0\ldots6$.
We assume that the defender recruits the $k$ informants with the highest information sharing intensity $w_{u1}$. The optimal defender strategy against the informant-aware attacker case is found using \texttt{QRI-MILP}. The defender strategy against the level-0 attacker case is computed using \Pasaq~\cite{yang2012computing}. The defender utility against the level-0 attacker is found by first computing $q_i, \DefEU_0$ and then using the results to compute $\DefEU(U)$.

In Figure~\ref{fig:adaptive-comp}, we show how the defender utility in the two cases varies with the number of informants and defensive resources. On average, we see that the defender utility is marginally higher against the level-0 attacker than against the informant-aware attacker, particularly when the defender has either very few or very many defensive resources. 
% Counter-intuitively, on individual game instances, it is possible that the defender receives higher utility against the adaptive adversary than against the non-adaptive adversary.
We also compare the defender's utility of the level-0 defender (defending against level-0 attackers) and the informant-aware defender (defending against informant-aware attackers). The results are deferred to Appendix \ref{app:additionalexp}.

\subsubsection{Comparison between the Bi-Level Algorithm and \texttt{QRI-MILP}}
\label{sec:bi_level_comparison}
We empirically compare the bi-level optimization algorithm with \texttt{QRI-MILP}. 
%The \texttt{QRI-MILP} algorithm computes an approximate defender's optimal strategy against a variant of level-0 attackers who take into account the impact of informants when determining the target they attack (see Appendix G for more details).  
%Our experimental setup follows that of the level-0 to level-$\infty$ comparison.
We set $|X|=n=6$, $|Y|=p_1=1$, and $G_S$ to be fully connected. %We focus on the single-attacker case, and assume the attacker is adaptive and chooses to attack with probability 1. 
We vary $r$ from $1\ldots6$ and  $k$ from $0\ldots6$.

In both cases, we assume that the defender recruits the $k$ informants with the highest information sharing intensity $w_{u1}$. 
The results are shown in Figure \ref{fig:algo-comp}. In general, our bi-level algorithm gives higher expected defender utilities than the \texttt{QRI-MILP} algorithm, except when $r=1$. Our results show that both increasing the number of resources and hiring more informants increase the defender's utility. However, as the number of resources ($r$) increases, the utility gain from hiring more informants diminishes.

Intuitively, if the number of resources equals the number of targets, the defender should always cover all the targets, Interestingly, during our experiments, we observed that in this case, the optimal defender strategy may not always use all his resources to cover all the targets. The reason is that in a general sum game, by decreasing the probability of protecting a certain target on purpose, the defender can lure the attacker into attacking the target more frequently, and thus increase his expected utility. Such strategies can be found in real-world wildlife protections where the patrollers may sometimes deliberately ignore the tips. This is also reflected in our bi-level algorithm. If the defender always uses all his resources, then both the defender's and the attacker's strategies are fixed, and hiring more informants does not increase the defender's expected utility. But if the defender strategy does not always use all his resources, then hiring more informants could help (see the bi-level algorithm for the $r=6$ case in Figure \ref{fig:algo-comp}). %However, the \texttt{QRI-MILP} algorithm does not seem to capture this property. 

	\iffalse
	\begin{figure}[htbp]
	    \centering
		\includegraphics[height=4cm]{figure/bi-level_milp_comparison.png}
    	\caption{Comparisons of the bi-level optimization algorithm and the \texttt{QRI-MILP} algorithm.}
    	\label{fig:algo-comp}
	\end{figure}
	\fi
\section{Discussion and Conclusion}
\label{sec:conclusion}
%\hta{re-summarize the work} 
In this paper, we introduced a novel two-stage security game model and a multi-level QR behavioral model that incorporated community engagement.
We provided complexity results,
%provided a linear time computable strategy for allocating defensive resources given the tips from informants, proved the NP-Hardness of solving the optimal set of informants to recruit, 
developed algorithms to find (sub-)optimal groups of informants to recruit against level-0 attackers and evaluated the algorithms through extensive experiments. Our results also generalize to the case where informants have heterogeneous recruitment costs and to different kinds of attacker response models, such as SUQR model~\cite{nguyen2013analyzing}, which can be done by calculating the attacker's response correspondingly. Also, see Appendix \ref{app:levelk} on how to extend our algorithms to defend against level-$\kappa$ ($\kappa<\infty$) attackers.
In Section \ref{levelinf}, we defined a more powerful type of attacker that could respond to the marginal strategy and developed a bi-level optimization algorithm to find the optimal defender's strategy in this case.
% Although we focus on informants with homogeneous costs to recruit, our algorithms support heterogeneous recruitment costs. In addition, our results can be applied to different kinds of attackers' response models, such as the SUQR model~\cite{nguyen2013analyzing}.
% We note that if the defender is able to recruit informants who provide tips with high probability, the attackers may empirically observe a defender strategy that significantly differs from the uninformed defender strategy. In this case, we must relax the assumption that the attacker will follow the QR-adversary model against the uninformed defender strategy. In Appendix \ref{app:informant-aware-attackers}, we explore this scenario with a single attacker and solve for the complete defender strategy in one stage. Although the additional constraints leads to a non-convex problem, we give a piecewise linear approximation to find an approximate solution.
\gty{slight changes in the beginning}
%Depending on the impact of informants, the attackers may empirically observe a defender strategy that significantly differs from the uninformed defender strategy. In this case, the assumption that the attacker will follow the QR-adversary model against the uninformed defender strategy might not hold. 
%In Section 6, we explore relaxing the assumption that the attacker will follow the QR-adversary model against the uninformed defender strategy in a model with a single adaptive attacker and solve for the optimal defender strategy. We give an approximate solution to the problems and analyze the differences between the models through experiments.
%While the additional constraints lead to a non-convex problem, we give a piecewise linear approximation for an approximate solution. Finally, we analyze the differences between the models through experiments.

In the anti-poaching domain, some conservation site managers utilize the so-called ``intelligence'' operations that rely on informants in nearby villages to alert rangers when they know the poachers’ plans in advance. The deployment of the work relies on the site manager to provide their understanding of the social connections among community members. The edges and parameters of the bipartite graph in our model can be extracted from a local social media application or historical data collected by site managers. 
Recruiting and training reliable informants is costly and managers may only afford a limited number of them. Our model and solution can help the managers efficiently recruit informants, make the best use of tips and evaluate the trade-off between allocating budget to hiring rangers and recruiting informants in a timely fashion.

% For future work,  there are several directions that we canpotentially pursue
\gty{slight change in the beginning}
For future work, instead of using a particular behavior model, we can use historical records as training data and learn the attackers' behavior in different domains. It would also be interesting to consider the case where the informants can only provide inaccurate tips or other types of tips, e.g., some subset of targets will be attacked instead of a single location. We can also model the informants as strategic agents. In real life, it is possible that informants may also provide fake information if they have their own utility structures. We can try to reward them to elicit true information and maximize the defender's utility.

\section*{Acknowledgement}
This work is supported in part by NSF grant IIS-1850477 and a research grant from Lockheed Martin.

%%%%%%%%%%%%%%%%%%%%%%%%%%%%%%%%%%%%%%%%%%%%%%%%%%%%%%%%%%%%%%%%%%%%%%%%%%%%%%%%%%%%%%%%%%%%%%%%%%%%%%%%%
%% bibliography: see CFP for number of permitted pages
\balance
\bibliographystyle{ACM-Reference-Format}  % do not change this line!
\bibliography{main}  % put name of your .bib file here
\clearpage
\nobalance
\appendix
\section*{ \LARGE Appendix:\\ Green Security Game with Community Engagement}
% \bigskip
\section{Complexity of different algorithms}
\label{app:complexity-table}
\begin{table}[!h]
\centering
\begin{tabular}{lr}  
\toprule
Algorithm  & Time Complexity  \\
\midrule
\BF & $O(|X|^k2^{|Y|}n^2r|Y|^3)$\\

\Ctrunc & $O(|X|^kn^2r|Y|^{C+3})$\\

\Sampling & $O(|X|^k\mathsf{T}n^2r|Y|^3)$\\

\SpeCase & $O(|X|^k n^2r|Y|^4)$\\

\Heuristic & $O(2^{k}|X|n^2r|Y|^3)$\\
\bottomrule
\end{tabular}
\caption{Complexity Table}
\label{table2}
\end{table}

\section{Proof of Theorem \ref{thmalloc}}\label{app:thmalloc}
\thmalloc*
\begin{proof}
Given the tips $\mathbf{V}$, the defender should calculate $\EG_i(|V_i|,V_0)$ for each target $i\in T$, and then allocate the resources to $r$ of the targets with the highest $\EG_i$.

The above strategy is indeed optimal since the expected utility with no resources is given by $\sum_{i\in T}\EU^u_{i}(|V_i|,V_0)$, and once an additional unit of resource is given, it should always be allocated to the uncovered target that could lead to the largest increment in expected utility, i.e., the target with the largest $\EG_i(|V_i|,V_0)$. 

The calculation of $\EG_i(|V_i|,V_0)$ for each $i\in T$ can be done in $O(n+|Y|)$ time, and finding the $r$ largest $\EG_i(|V_i|,V_0)$ can be done in $O(n)$ time, leading to the overall complexity of $O(|Y|+n)$.
\end{proof}

\section{Proof of Theorem~\ref{thm:nphard}}
\label{app:nphard}
\thmnphard*
\begin{proof}
Consider the case where $r=1$, $p_v=w_{uv}=1$ for all $u,v$, and the targets are uniform i.e., $R_i^d$'s ($R_i^a, P_i^d, P_i^a$) are the same for all $i\in T$. We use the notation $R^d$ ($P^d$) instead of $R_i^d$ ($P_i^d$) for simplicity. Let $\lambda=0$.
	
    To start with, we investigate how $\DefEU(U)$ depends on a given $U$. Since $p_v=1$ and $w_{uv}=1$ for all $u\in X,v\in Y$, all attackers in $V$ will be reported to attack a location. Let random variable $X_i=|V_i|$ be the number of attackers who are reported to attack location $i$. %and $X_0=\sum_{v\in Y\setminus V}\tilde{p}_v(V_0)=|Y|-|V|$. 
    Since the targets are uniform, an attacker will attack each location with  probability $q_i=\frac{1}{n}$ if he goes attacking.
	Then the defender's expected utility  $\DefEU(U)$ could be written as 
	\begin{align*}
	\DefEU(U)
		%&=&\mathbb{E}\left[\max_{i\in T}X_i+\frac{X_0}{n}\right]R^d+\mathbb{E}\left[\sum_{i\in T}X_i+X_0-\left(\max_{i\in T}X_i+\frac{X_0}{n}\right)\right]P^d\\
		=&\left(\mathbb{E}\left[\max_{i\in T}X_i\right]+\frac{|Y|-|V|}{n}\right)R^d\\
		&+\left(|V|-\mathbb{E}\left[\max_{i\in T}X_i\right]+\frac{n-1}{n}(|Y|-|V|)\right)P^d\\
		=&\left(\mathbb{E}\left[\max_{i\in T}X_i\right]-\frac{|V|}{n}\right)\left(R^d-P^d\right)+\frac{|Y|}{n}R^d+\frac{(n-1)|Y|}{n}P^d.
	\end{align*}
	The latter two terms are independent of the choice of informants so to maximize $\DefEU$, it suffices the maximize $\left(\mathbb{E}\left[\max_{i\in T}X_i\right]-\frac{|V|}{n}\right)$.
	% Now we check that to maximize $\DefEU$ is to maximize the term $\left(\mathbb{E}\left[\max_{i\in T}X_i\right]-\frac{|V|}{n}\right)$, which depends only on the size of $V$.
	
	We can prove by induction on $|V|$ that $\left(\mathbb{E}\left[\max_{i\in T}X_i\right]-\frac{|V|}{n}\right)$ increases as $|V|$ increases, or $\mathbb{E}\left[\max_{i\in T}X_i\right]$ increases by at least $\frac{1}{n}$ if $|V|$ is increased by 1: 
	\begin{enumerate}
	    \item  Since $\mathbb{E}\left[\max_{i\in T}X_i\right]=1$ when $|V|=1$ and $\mathbb{E}\left[\max_{i\in T}X_i\right]=1+\frac{1}{n}$ when $|V|=2$, it holds for $|V|=1$. 
        \item Consider $|V|\geq 1$ and the corresponding sequence $\{X_i\}_{i=1}^n$. Let $X_m=\max_{1\leq i\leq n}\{X_i\}$. We add an attacker to $V$ and denote by $p$ the probability of he targeting the location with the largest $X_i$. Thus the expected maximum increase by $p$. Since $p\geq \frac{1}{n}$ and by a simple coupling argument, we have that $\mathbb{E}\left[\max_{i\in T}X_i\right]$ increases by at least $\frac{1}{n}.$
	\end{enumerate}

	Thus, in this case, solving for the optimal solution of the original problem is equivalent to solving for $U$ that maximizes the size of $V$ in the first stage.

	We show that the optimization problem is NP-Hard using a reduction from \MCP: we are given a number $k$ and a collection of sets $S$. The objective is to find a subset $S'\subseteq S$ of sets such that $|S'|\leq k$ and the number of covered elements $\left|\bigcup_{S_i\in S'}S_i\right|$ is maximized.
	Let $X=\{x_1,\ldots,x_{|S|}\}$, $Y=\bigcup_{S_i\in S}S_i$, $E=\{(x_i,y):i\in[|S|]\land y\in S_i\}$, $p_v=1$ for all $v\in Y$ and $W_e=1$ for all $e\in E$. Thus to find a $U\subseteq X$ with $|U|\leq k$ that maximizes the size of $V$ is equivalent to finding a subset of sets with size no larger than $k$ that maximizes the number of covered elements in the instance of \MCP.
	\end{proof}
	
	\section{Proof of Lemma \ref{lem:SISI}}
	\label{app:prooflemSISI}
	\lemSISI*
	\begin{proof}
	Since $w_{uv}=1$ for all $u,v$, we have $\tilde{p}_v(V_0)=0$ for each $v\in V\setminus V_0$ given $V_0$. Therefore, the expected gain of target $j$ with $\tilde{y}$ reported attacks can be written as  $\EG_j=(\tilde{y}+q_{j}\sum_{v\in Y\setminus V}p_v) (R_j^d-P_j^d)$ and the calculation of $f(\cdot)$ depends only on the size of $|V_0|$. Thus, instead of enumerating $V_0$, we enumerate $0\leq t_0\leq |V|$ as the size of $V_0$ in line 2 of Algorithm \ref{AlgoBF}, and replace $P_{V_0}$ in Algorithm \ref{AlgoBF} with $P_{t_0}$, where $P_{t_0}=\Pr[|V_0|=t_0|U]$ can be obtained by expanding the following polynomial $\prod_{v\in V}(1-p_v+p_vx)=\sum_{i=0}^{|V|}P_{i}x^i.$
	Therefore, $\DefEU(U)$ can be calculated in $O(n^2r|Y|^4)$ time.
	
	Since all possible $U$ can be enumerated in $O(|X|^k)$, the optimal set of informants can be computed in $O(|X|^kn^2r|Y|^4)$.
	\end{proof}

\section{ASISI}\label{app:asisi}
In Algorithm~\ref{AlgoBFSISI} and Algorithm~\ref{AlgoDPSISI}, we present the polynomial time algorithm used to compute $\DefEU(U)$ when $w_{uv} = 1, \forall (u, v) \in E$.

\begin{algorithm}[h]
	\caption{Calculate $\DefEU(U)$}\label{AlgoBFSISI}
	\begin{algorithmic}[1]
		%	\For{$v\in Y$}
		%	\State $\tilde{p}_v\gets \left(1-\prod_{(u,v)\in E,u\in U}(1-w_{uv})\right)$
		%	\EndFor
		%	\State Expand the polynomial $\prod_{v\in Y} (1-p_v+p_v(1-\tilde{p}_v)x+p_v\tilde{p}_vxy)$ of $x$ and $y$ as $\sum_{0\leq i,j\leq |Y|}f_{i,j}x^iy^j$
		%\State Initialize $EG_i\gets 0$ for each target $i\in T$
		\State Expand polynomial $\prod_{v\in V}(1-p_v+p_vx)=\sum_{j=0}^{|V|}P_{j}x^j$
		\State $\EU\gets 0$
		%\For{$c_0\gets 0$ to $n$}\Comment{Enumerate the number of attackers $c_0$ that go poaching}
		\For{all possible number of reported attackers $0\leq t_0\leq |V|$}%\Comment{Enumerate the set of reported attackers }
		\If {$t_0=0$}
		\State $\EU=\EU+P_{t_0}\sum_{v\in Y}p_v\DefEU_0$
		\State Continue to line 2
		\EndIf
		\For {target $i\in T$ and $0\leq t_i\leq t_0$} \Comment{Enumerate target $i$ and the number of attackers $t_i$ targeting $i$}
		
		\State Calculate $f(\cdot)$ given $t_0,i,t_i$
		\State $\EG_i\gets(t_i+q_i\sum_{v\in Y\setminus V}p_v) (R_i^d-P_i^d)$
		\State $\EU_i^u\gets (t_i+q_i\sum_{v\in Y\setminus V}p_v)P_i^d$
		\State $P_{i,r}\gets (t_0-t_i)! \left(\sum_{x=0}^{r-1}f(s,x,t_0-t_i)\right)$
		%\State $\tilde{U}\gets 0$
		\State $\EU=\EU+P_{t_0} {t_0 \choose t_i}q_i^{t_i}\cdot P_{i,r}\cdot EG_i$
		\State $\EU=\EU+P_{t_0} {t_0 \choose t_i}q_i^{t_i}(1-q_i^{t_i})\EU_i^u$
		%\State $EU=EU+f_{c_0,c_1}{c_1\choose t_i}\cdot\left( EG_i\cdot (c_1-t_i)! \sum_{x=0}^{r}f(s,x,c_1-t_i)+EU_i^u\right)$
		%\For{$x\gets 0$ to $r$} 
		% \State 
		%\EndFor
		
		\EndFor
		\EndFor
		%\EndFor
		\State $\DefEU(U)\gets \EU$

	\end{algorithmic}
\end{algorithm}

\begin{algorithm}[h]
    \caption{Calculate $f(\cdot)$ given $t_0$,$i$,$t_i$}\label{AlgoDPSISI}
    \begin{algorithmic}[1]
    \State $\{{i_1},\ldots,{i_{n-1}}\}\gets T\setminus \{i\}$
    % \State $\EG_i\gets(t_i+q_i\sum_{v\in Y\setminus V_0}\tilde{p}_v(V_0)) (R_i^d-P_i^d)$
    \State $\EG_{i}=(t_i+q_{i}\sum_{v\in Y\setminus V}p_v) (R_i^d-P_i^d)$
    \State Initialize $f(s,x,y)\gets 0$ for all $s,x,y$ 
		\State $f(0,0,0)\gets 1$
		\For{$s\gets 1$ to $n-1$}
		\For{$x\gets 0$ to $\min(s,r)$}
		\For{$y\gets 0$ to $t_0-t_i$}
		\For{$\tilde{y}\gets 0$ to $y$}
		\State $\EG_{i_s}=(\tilde{y}+q_{i_s}\sum_{v\in Y\setminus V}p_v) (R_i^d-P_i^d)$
		\If {$\EG_{i_s}>\EG_i$}
		\State $f(s,x,y)\gets f(s,x,y)+\frac{q_{i_s}^{\tilde{y}}}{\tilde{y}!} f(s-1,x-1,y-\tilde{y})$
		\Else 
		\State $f(s,x,y)\gets f(s,x,y)+\frac{q_{i_s}^{\tilde{y}}}{\tilde{y}!}f(s-1,x,y-\tilde{y})$
		\EndIf
		\EndFor  
		\EndFor
		\EndFor
		\EndFor
    \end{algorithmic}
\end{algorithm}

\section{Defender Utility is not Submodular}
\label{app:submodular}
 We provide a counterexample that disproves the submodularity of $\DefEU(U)$.

%\paragraph{A counterexample that disproves the submodularity of $\DefEU(U).$} 
\begin{example}
Consider a  network $G_S=(X,Y,E)$ where $X=\{u_1,u_2\}$, $Y=\{v_1,v_2,v_3\}$, $E=\{(u_1,v_2),(u_2,v_3)\}$, $p_v=1 \,\forall v\in Y$ and $w_{uv}=1\, \forall (u,v)\in E$. There are 2 targets $T=\{1,2\}$, where $R_i^d=i$, $P_i^d=-10^{-8}\approx 0$ for any $i\in T$. 
Letting $\lambda=0$ yields $q_i=0.5$. The defender has only 1 resource. We can see that $\DefEU(\emptyset)=\DefEU(\{1\})=\DefEU(\{2\})=3$, $\DefEU(\{1,2\})=\frac{1}{4}(2+0.5)+\frac{1}{4}(4+1)+\frac{1}{2}(2+1)=3.375.$ As a result, $\DefEU(\{1,2\})+\DefEU(\emptyset)>\DefEU(\{1\})+\DefEU(\{2\}).$

\end{example}

%\section{Additional Figures}
%\label{app:additional-figures}

\section{Proof of Lemma~\ref{lemmabound}}
\label{app:ctruncated-proof}
\lemmabound*
\begin{proof}
Let the random variable $W$ be the number of attacks. Let $\mathcal{A}_1$ be the set of events of having  no less than $C$ reported attackers   and $\mathcal{A}_2$ be the set of events of having  no less than $C$ attacks. Let $E_A$ be the expected defender's utility  taken over all possible tips given an event $A$.
%By the Chernoff Bound, we have $\Pr[X\geq i]\leq e^{-2(i-C)^2/|Y|}$. 
By noticing that $\mathcal{A}_1\subseteq\mathcal{A}_2$, we have 
\begin{eqnarray}
&&|\DefEU(U)-{\DefEU}(U,C)|\nonumber \\
&\leq& \sum_{A\in \mathcal{A}_1}\Pr[A]|E_A|\leq \sum_{A\in \mathcal{A}_2}\Pr[A]|E_A|\nonumber\\
&\leq& Q\sum_{i=C}^{|Y|}\Pr[W= i]\cdot i\nonumber\\
&=&Q\left(C\Pr[W\geq C]+\sum_{i= C+1}^{|Y|}\Pr[W\geq i]\right)\nonumber\\
&\leq& Q\left(Ce^{-2(C-C')^2/|Y|}+\sum_{i=C+1}^{|Y|}e^{-2(i-C')^2/|Y|}\right)\label{eq:chernoff}\\
&<& Q \cdot e^{-2(C-C')^2/|Y|}\left( C+\frac{1}{1-e^{-4(C-C')/|Y|}}\right).\label{eq:final_bound}
\end{eqnarray}
Inequality \eqref{eq:chernoff} follows by the Chernoff Bound, and inequality \eqref{eq:final_bound} follows since 
\begin{eqnarray*}
	&&\sum_{i=C+1}^{|Y|}e^{-2(i-C')^2/|Y|}\\
	&\leq& e^{-2(C+1-C')^2/|Y|}\sum_{i\geq 0}e^{-4i(C+1-C')/|Y|}\\
	&<& e^{-2(C-C')^2/|Y|}\cdot \frac{1}{1-e^{-4(C-C')/|Y|}}.
\end{eqnarray*}
\end{proof}

\section{Non-convergence of the level-$\kappa$ response}
\label{app:counter-example}
\begin{example}
\label{example:non_convergence}
Suppose there is a single attacker and two targets with the following payoffs:
\begin{gather*}
    R^a_1 = 0.6, R^a_2 = 0.8,\\
    P^a_1 = -0.8, P^a_2 = -0.6.
\end{gather*}
In this case, there are only two possible tips: the attacker attacks target 1 ($\mathbf{V}_1$), and the attacker attacks 2 ($\mathbf{V}_2$). Assume that only 1 informant is recruited with report probability $w=0.5$. The defender has only 1 defensive resource and uses the following strategy:
\begin{gather*}
    \mathbf{x}_0=(0.5, 0.5), \mathbf{x}(\mathbf{V}_1)=(1.0, 0.0), \mathbf{x}(\mathbf{V}_2)=(0.0, 1.0).
\end{gather*}
When the attacker has $\lambda=2.9$, the level-$\kappa$ response will converge to $\mathbf{q}=(0.4283, 0.5717)$. However, if $\lambda=3.0$, then the process will eventually oscillate between $\mathbf{q}=(0.2924, 0.7076)$ and $\mathbf{q}'=(0.5676, 0.4324)$ iteratively.
\end{example}

\section{Proof of Lemma \ref{lem:lipschitz}}
\label{app:ipschitz-proof}
\lemLipschitz*
\begin{proof}
    Given defender's strategy $\mathbf{x}_0$ and $\mathbf{x}(\mathbf{V})$, define:
    \begin{gather*}
        g(\mathbf{q})=\QR(\MS(\mathbf{x}_0, \mathbf{x}, \mathbf{q})).
    \end{gather*}
    Then a level-($\kappa$+1) attacker's strategy can be computed by
    \begin{gather*}
        \mathbf{q}^{\kappa+1}=g(\mathbf{q}^{\kappa})=g(g(\mathbf{q}^{\kappa-1}))=\cdots=g^{\kappa}(\mathbf{q}).
    \end{gather*}
    The convergence of level-$\kappa$ is equivalent to the convergence of $g^{\kappa}(\mathbf{q})$.

	The marginal strategy $\hat{\mathbf{x}}$ can be written as:
	\begin{gather*}
		\hat{\mathbf{x}}(\mathbf{q})=\sum_{\mathbf{V}}\Pr\{\mathbf{V}\}\mathbf{x}(\mathbf{V})=(1-w)\mathbf{x}_0+\sum_{i}wq_i\mathbf{x}(\mathbf{V}_i).
	\end{gather*}
	Notice that the function $g(\mathbf{q})$ is just the quantal response against $\hat{\mathbf{x}}$:
	\begin{gather*}
		g_i(\mathbf{q}) = \frac{e^{\lambda u^a_i(\hat{x}_i)}}{\sum_{j}e^{\lambda u^a_j(\hat{x}_j)}},
	\end{gather*}
	where $u^a_i(\hat{x}_i)$ is the attacker's expected utility of attacking target $i$ when the defender's marginal strategy is $\hat{\mathbf{x}}$: $u^a_i(\hat{x}_i) = R^a_i-\hat{x}_i(R^a_i-P^a_i)$.
	Therefore,
	\begin{align*}
		\frac{\partial g_i}{\partial q_j}=&\frac{\partial g_i}{\partial u^a_i}\frac{\partial u^a_l}{\partial q_j}+\sum_{l\ne i}\frac{\partial g_i}{\partial u^a_l}\frac{\partial u^a_l}{\partial q_j} \\
		=&\lambda wg_i(1-g_i)(P^a_i-R^a_i)x_i(\mathbf{V}_j)+\sum_{l\ne i}\lambda wg_ig_l(R^a_l-P^a_l)x_k(\mathbf{V}_j)\\
		=&\lambda wg_i\left[(P^a_i-R^a_i)x_i(\mathbf{V}_j)+\sum_{l}g_l(R^a_l-P^a_l)x_l(\mathbf{V}_j) \right].
	\end{align*}
	Note that in the above equation, $0\le w, g_i,g_l,x_i(\mathbf{V}_j),x_l(\mathbf{V}_j)\le 1$, $P^a_i-R^a_i<0$ and $R^a_l-P^a_l>0$. Thus we have:
	\begin{gather}
		\label{eq:direvative}
		\lambda (P^a_i-R^a_i)x_i(\mathbf{V}_i)<\frac{\partial g_i}{\partial q_j}<\lambda \sum_{l}g_l(R^a_l-P^a_l)x_l(\mathbf{V}_j).
	\end{gather}
	
	On the other hand, $\bar{x}_i\le \frac{L}{n\lambda(R^a_i-P^a_i)}$ means $x_i(\mathbf{V}_j)\le \frac{L}{n\lambda(R^a_i-P^a_i)},\forall j$. Plugging into Equation \eqref{eq:direvative}, we get:
	\begin{gather*}
		-\frac{L}{n}<\frac{\partial g_i}{\partial q_j}<\frac{L}{n}\sum_{l}g_l=\frac{L}{n}.
	\end{gather*}
	For any $\mathbf{q}'\ne \mathbf{q}$, let $\mathbf{q}^{(i)}=(q_1,\dots,q_i,q'_i,\dots,q'_n)$. So $\mathbf{q}^{(0)}=\mathbf{q}'$ and $\mathbf{q}^{(n)}=\mathbf{q}$. Therefore,
	\begin{align*}
	&\left\|g(\mathbf{q})-g(\mathbf{q}')\right\|_1
	=\sum_{i=1}^n\left|g_i(\mathbf{q})-g_i(\mathbf{q}')\right|\\
	=&\sum_{i=1}^n\left|\sum_{j=1}^{n}\left[g_i(\mathbf{q}^{(j)})-g_i(\mathbf{q}^{(j-1)})\right]\right|\\
	=&\sum_{i=1}^n\sum_{j=1}^{n}\left|g_i(\mathbf{q}^{(j)})-g_i(\mathbf{q}^{(j-1)})\right|\\
	=&\sum_{i=1}^n\sum_{j=1}^{n}\left|\int_{q'_j}^{q_j}\left.\frac{\partial g_i(\mathbf{q})}{\partial q_j}\right|_{\mathbf{q}=(q_1,\dots,q_{i-1},s,q'_{i+1},\dots,q'_n)}\,\mathrm{d}s\right|\\
	<&\sum_{i=1}^n\sum_{j=1}^{n}\int_{q'_j}^{q_j}\frac{L}{n}\,\mathrm{d}s
	\le\sum_{i=1}^n\sum_{j=1}^{n}\frac{L}{n}|q_j-q'_j|\\
	=&\sum_{i=1}^n\frac{L}{n}\|\mathbf{q}-\mathbf{q}'\|_1
	=L\|\mathbf{q}-\mathbf{q}'\|_1.
	\end{align*}
\end{proof}

\section{Algorithm for Defending against Informant-Aware Attackers}
\label{app:infawareatt}

\begin{restatable}[]{proposition}{propositionQRI}
\label{non-decreasing}
The optimal objective of \texttt{QRI} is non-decreasing in $w$.
\end{restatable}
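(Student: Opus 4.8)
The plan is to observe that the objective of \texttt{QRI} depends on the decision variables only through $\mathbf{y} = (y_1, \dots, y_n)$, and then to show that the set of \emph{achievable} $\mathbf{y}$ grows monotonically with $w$. Writing the objective as the fixed function
\[
f(\mathbf{y}) = \frac{\sum_{i \in \calT} e^{\lambda R^a_i} e^{-\lambda(R^a_i - P^a_i) y_i}\left[(R^d_i - P^d_i) y_i + P^d_i\right]}{\sum_{i \in \calT} e^{\lambda R^a_i} e^{-\lambda(R^a_i - P^a_i) y_i}},
\]
the program is equivalent to $\max_{\mathbf{y} \in Y_w} f(\mathbf{y})$, where
\[
Y_w = \left\{ \mathbf{y} : \exists\, \mathbf{x}, \mathbf{z} \in [0,1]^n,\ \textstyle\sum_i x_i \le r,\ y_i = (1-w) x_i + w z_i\ \forall i \right\}.
\]
Since $f$ does not depend on $w$, it suffices to prove the monotonicity of the feasible sets, namely $Y_{w_1} \subseteq Y_{w_2}$ whenever $0 \le w_1 \le w_2 \le 1$. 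This immediately yields $\mathrm{OPT}(w_1) = \max_{\mathbf{y} \in Y_{w_1}} f(\mathbf{y}) \le \max_{\mathbf{y} \in Y_{w_2}} f(\mathbf{y}) = \mathrm{OPT}(w_2)$.

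The core step is this set inclusion, and the main obstacle is the budget constraint $\sum_i x_i \le r$, which is the only constraint coupling the coordinates (per-coordinate, any $y_i \in [0,1]$ is trivially achievable for every $w$). I would fix $\mathbf{y} \in Y_{w_1}$ with a witness $(\mathbf{x}, \mathbf{z})$ and then build a witness $(\mathbf{x}', \mathbf{z}')$ for $w_2$ whose $\mathbf{x}'$ component is coordinatewise no larger than $\mathbf{x}$, so that $\sum_i x'_i \le \sum_i x_i \le r$ is automatic. The intuition is that a larger report probability $w$ lets the defender lean more on the budget-free variable $\mathbf{z}$ and less on the budget-constrained $\mathbf{x}$. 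I would make this precise by taking $x'_i$ as small as the box constraints permit, i.e. $x'_i = \max\{0, (y_i - w_2)/(1 - w_2)\}$ (for $w_2 < 1$), and setting $z'_i = (y_i - (1-w_2)x'_i)/w_2$, which reproduces the same $y_i$.

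The verification then reduces to the single inequality $y_i \le x_i + w_2(1 - x_i)$, which simultaneously guarantees $x'_i \le x_i$ and $z'_i \in [0,1]$ (while $x'_i \in [0,1]$ follows from $0 \le x'_i \le x_i \le 1$). I would establish this inequality from $y_i = x_i + w_1(z_i - x_i)$ by a short case analysis on the sign of $z_i - x_i$: if $z_i \ge x_i$, then $w_1 \le w_2$ together with $z_i \le 1$ gives $y_i \le x_i + w_2(z_i - x_i) \le x_i + w_2(1 - x_i)$; if $z_i < x_i$, then $y_i < x_i \le x_i + w_2(1 - x_i)$. The boundary cases are routine: $w_2 = 0$ forces $w_1 = 0$, so the inclusion is an identity, and $w_2 = 1$ makes $Y_{w_2} = [0,1]^n$ the entire box, which contains every $Y_{w_1}$. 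Assembling these pieces completes the monotonicity argument and hence the proof.
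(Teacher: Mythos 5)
Your proposal is correct, and its skeleton matches the paper's: both arguments observe that the objective depends on the decision variables only through $\mathbf{y}$, and both prove monotonicity by showing that any $\mathbf{y}$ feasible under $w_1$ remains feasible under $w_2 \ge w_1$ via an explicit witness. The witness itself, however, is genuinely different. The paper keeps $\mathbf{x}$ unchanged and sets $\mathbf{z}' = \frac{w_1}{w_2}\mathbf{z} + \bigl(1 - \frac{w_1}{w_2}\bigr)\mathbf{x}$; one line of algebra shows $(1-w_2)x_i + w_2 z'_i = (1-w_1)x_i + w_1 z_i = y_i$, the budget constraint $\sum_i x_i \le r$ is untouched by construction, and $z'_i \in [0,1]$ because it is a convex combination of $x_i, z_i \in [0,1]$. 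This sidesteps your case analysis and your boundary cases $w_2 \in \{0,1\}$ entirely (with $w_2 > w_1 \ge 0$ the ratio $w_1/w_2$ is always defined). Your construction, taking $x'_i = \max\{0, (y_i - w_2)/(1-w_2)\}$ and solving for $z'_i$, is more laborious but buys a slightly stronger structural fact: the budget-constrained component can be made coordinatewise minimal, formalizing the intuition that a higher tip probability lets the defender shift reliance from the budgeted routine patrol $\mathbf{x}$ to the budget-free response $\mathbf{z}$ — a fact the proposition itself does not need. One small slip: in your case $z_i < x_i$ the strict inequality $y_i < x_i$ requires $w_1 > 0$; when $w_1 = 0$ you get $y_i = x_i$, but the non-strict bound $y_i \le x_i + w_2(1-x_i)$ still holds, so the argument goes through.
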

\iffalse
\begin{proposition}
\label{non-decreasing}
The solution value to \texttt{QRI} is non-decreasing in $w$.
\end{proposition}
\fi

\begin{proof}
Consider the two optimization problems induced by different values for $w$: $w_1, w_2$ where $w_2 > w_1$. Let $(\mathbf{x}, \mathbf{y}, \mathbf{z})$ be a solution for when $w = w_1$. Then, $(\mathbf{x}, \mathbf{y}, \frac{w_1}{w_2}\mathbf{z} + (1 - \frac{w_1}{w_2})\mathbf{x})$ is a feasible solution for when $w = w_2$ that achieves the same objective value. To see why it is feasible, observe that constraint (\ref{optln:y}) is satisfied by construction and constraint (\ref{optln:z}) is satisfied since the new value for each $z_i$ is a convex combination of the previous $x_i, z_i$, which were both in $[0, 1]$.
\end{proof}

 Proposition~\ref{non-decreasing} implies that when selecting informants, it is optimal to simply maximize $w$. Since $w = 1 - \prod_{u \in U} (1 - w_{u1})$, we can select informants greedily and choose the $k$ informants with the largest information sharing intensity $w_{u1}$. We can then solve the optimization problem to find the optimal allocation of resources. 
Finally, we discuss how to find an approximate solution to \texttt{QRI} using a MILP approach.% in Appendix G.

\iffalse
With this new model and solution finding method, we can better understand the impact of when attackers strategically adapt to the presence of informants.
In Section 7, we discuss results from two experiments that compare the utility achieved by the defender in our original two stage green security game where the attacker responds to the uninformed defender strategy (non-adaptive attackers) and in the modified game where the attacker takes into account the effect of informants (adaptive attackers). We refer to the defender strategy that we have developed for each case as the adaptive case defender and non-adaptive case defender.
We also investigate the importance of modeling the attacker as adaptive or non-adaptive. Specifically, we investigate how two strategies perform against the adaptive attacker: the adaptive case defender and the non-adaptive case defender. We can modify the non-adaptive case defender to fit our model for adaptive attackers; formally, we solve for $\mathbf{x}$ assuming a non-adaptive attacker and then have fixed tip response behavior $\mathbf{z} = \vec{1}$ (always responds to tips). We investigate the defender utility for both of these strategies against the adaptive attacker.
\fi

\label{sec:milp-solution}
We can compute the optimal defender strategy by adapting the approach used in the \Pasaq algorithm~\cite{yang2012computing}. Let $N(y), D(y)$ be the numerator and denominator of the objective in \texttt{QRI}. As with \Pasaq, we binary search on the optimal value $\delta^*$. We can check for feasibility of a given $\delta$ by rewriting the objective to $\min_{x, y, z}\delta D(y) - N(y)$ and checking if the optimal value is less than $0$. To solve the new optimization problem, which still has a non-linear objective function, we adapt their approach of approximating the objective function with linear constraints and write a MILP.

First, let $\theta_i = e^{\lambda R^a_i}$, $\beta_i = \lambda(R^a_i - P^a_i)$, and $\alpha_i = R^d_i - P^d_i$. We rewrite the objective as

$$\sum_{i \in \calT}\theta_i(\delta - P^d_i)e^{-\beta_i(y_i)} + \sum_{i \in \calT}\theta_i\alpha_iy_ie^{-\beta_i(y_i)}$$

\begin{figure}[h]
    \centering
    \includegraphics[width=0.45\textwidth]{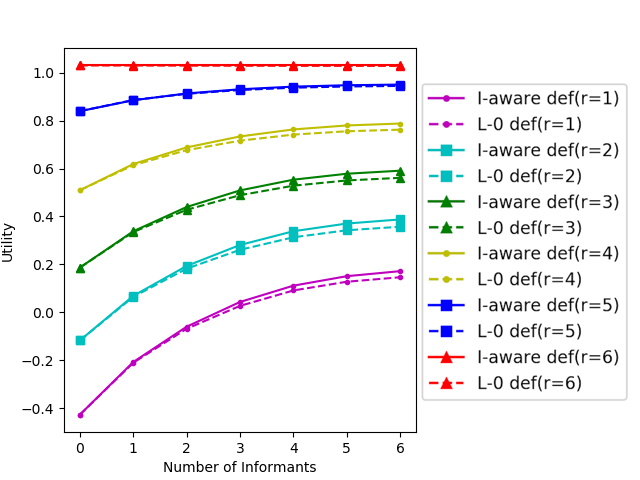}
    \caption{Utility comparison between the level-0 defender and the informant-aware defender against an informant-aware attacker.}
    \label{fig:simple-optimal}
\end{figure}

We have two non-linear functions that need approximation $f^1_i(y) = e^{-\beta_i y}$ and $f^2_i(y) = ye^{-\beta_i y}$. Let $\gamma_{ij}$ be the slope for the linear approximation of $f^1_i(y)$ from $(\frac{j}{K}, f^1_i(\frac{j}{K}))$ to $(\frac{j + 1}{K}, f^1_i(\frac{j + 1}{K}))$ and similarly with $\mu_{ij}$ for $f^2_i(y)$.

The key change in our MILP compared to \Pasaq is that we replace the original defender resource constraint with constraints (\ref{optln:change-start}) - (\ref{optln:change-end}), which take into account the ability of the defender to respond to tips.

\texttt{QRI-MILP}:
\begin{align}
    \min_{x, y, z, a}  \quad \sum_{i \in \calT}&\theta_i(\delta - P^d_i)(1 + \sum_{j = 1}^{K}\gamma_{ij}y_{ij}) - \sum_{i \in \calT}\theta_i\alpha_i\sum_{j=1}^{K}\mu_{ij}y_{ij} \nonumber\\
    \text{subject to} & \quad \sum_{j = 1}^{K} y_{ij} = (1 - w)x_i + wz_i, \quad \forall i \label{optln:change-start}\\
    & \quad \sum_{i \in \calT} x_i \leq r\\
    & \quad 0 \leq x_i \leq 1, \quad \forall i \\
    & \quad 0 \leq z_i \leq 1, \quad \forall i \label{optln:change-end}\\
    & \quad 0 \leq y_{ij} \leq \frac{1}{K}, \quad \forall i, j = 1 \ldots K \label{optln:milp-start}\\
    & \quad a_{ij}\frac{1}{K} \leq y_{ij}, \quad \forall i, j = 1 \ldots K-1\\
    & \quad y_{i(j+1)} \leq a_{ij}, \quad \forall i, j = 1 \ldots K-1\\
    & \quad a_{ij} \in \{0, 1\}, \quad \forall i, j = 1 \ldots K-1\label{optln:milp-end}
\end{align}

\begin{proposition}
The feasible region for $\mathbf{y} = \langle y_i = \sum_{j=1}^K y_{ij}, i \in \calT \rangle$ of \texttt{QRI-MILP} is equivalent to that of \texttt{QRI}.
\end{proposition}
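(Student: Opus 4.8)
The plan is to prove the two feasible regions coincide by establishing both set inclusions, exploiting the observation that the auxiliary variables $\{y_{ij}\}$ and $\{a_{ij}\}$ of \texttt{QRI-MILP} serve only to encode a segment-wise decomposition of the aggregate $y_i=\sum_{j=1}^K y_{ij}$, and that this decomposition can realize \emph{every} value of $y_i$ in $[0,1]$ while imposing no further restriction on the aggregate. Throughout I will use that for $x_i,z_i\in[0,1]$ and $w\in[0,1]$ the quantity $(1-w)x_i+wz_i$ is a convex combination and hence lies in $[0,1]$, so the target aggregate value is always attainable.

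For the inclusion ``\texttt{QRI-MILP} $\subseteq$ \texttt{QRI}'', I would take any feasible point $(\mathbf{x},\{y_{ij}\},\mathbf{z},\{a_{ij}\})$ of \texttt{QRI-MILP}, set $y_i:=\sum_{j=1}^K y_{ij}$, and read off that $(\mathbf{x},\mathbf{y},\mathbf{z})$ satisfies every constraint of \texttt{QRI}: constraint~\eqref{optln:change-start} gives $y_i=(1-w)x_i+wz_i$, which is exactly \eqref{optln:y}, while the resource constraint $\sum_i x_i\le r$ and the box constraints on $\mathbf{x},\mathbf{z}$ are shared verbatim. This direction makes no use of the staircase constraints at all.

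The reverse inclusion ``\texttt{QRI} $\subseteq$ \texttt{QRI-MILP}'' is where the real work lies. Given feasible $(\mathbf{x},\mathbf{y},\mathbf{z})$ of \texttt{QRI}, I must exhibit a ``fill-in-order'' decomposition of each $y_i\in[0,1]$ into segment variables. Concretely, for each target $i$ I would let $m_i=\lfloor K y_i\rfloor$, set $y_{ij}=1/K$ for $j\le m_i$, $y_{i,m_i+1}=y_i-m_i/K\in[0,1/K)$, and $y_{ij}=0$ otherwise, together with $a_{ij}=1$ for $j\le m_i$ and $a_{ij}=0$ for $j>m_i$ (treating the boundary case $y_i=1$, where $m_i=K$ and all segments are saturated, separately). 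I would then verify \eqref{optln:milp-start}--\eqref{optln:milp-end}: the bound $a_{ij}/K\le y_{ij}$ holds because a segment is marked full ($a_{ij}=1$) only when it is actually saturated, and $y_{i(j+1)}\le a_{ij}$ holds because a later segment carries mass only after the previous one is marked full. Since $\sum_j y_{ij}=y_i=(1-w)x_i+wz_i$ by construction, \eqref{optln:change-start} is met, and the shared constraints on $\mathbf{x},\mathbf{z}$ are inherited from \texttt{QRI}.

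The main obstacle, and the step deserving the most care, is showing that the integer ``staircase'' constraints \eqref{optln:milp-start}--\eqref{optln:milp-end} neither enlarge nor shrink the set of achievable aggregates $y_i=\sum_j y_{ij}$. The point to make precise is that these constraints only force the $K$ segments to be filled in increasing order of $j$—a requirement essential for the piecewise-linear approximation in the \emph{objective} (so that the slopes $\gamma_{ij},\mu_{ij}$ are applied to the correct segments) but irrelevant to the feasible region of the sum: any feasible distribution and the ordered one yield the same total, and the ordered total ranges over exactly $[0,1]$. Once this equivalence between the ordered decomposition and the free aggregate is established, both inclusions collapse the two problems to the single condition $y_i=(1-w)x_i+wz_i$ with $\mathbf{x},\mathbf{z}\in[0,1]^n$ and $\sum_i x_i\le r$, completing the proof.
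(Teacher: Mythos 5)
Your proposal is correct and follows essentially the same route as the paper: the paper's proof likewise observes that under the substitution $y_i=\sum_{j=1}^K y_{ij}$ the non-staircase constraints translate verbatim, and then simply asserts that the remaining integer constraints ``allow for any potential $y_i$, represented correctly with the appropriate $y_{ij}$.'' Your explicit fill-in-order construction (with $m_i=\lfloor K y_i\rfloor$, saturated segments below $m_i$, the remainder in segment $m_i+1$, and the matching $a_{ij}$, including the $y_i=1$ boundary case) is precisely the detail the paper leaves implicit, and your verification of the staircase constraints is sound.
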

\begin{proof}
With the substitution $y_i = \sum_{j=1}^K y_{ij}$, constraints (\ref{optln:change-start}) - (\ref{optln:change-end}) are directly translated from \texttt{QRI}. The remaining constraints (\ref{optln:milp-start}) - (\ref{optln:milp-end}) can be shown to allow for any potential $y_i$, represented correctly with the appropriate $y_{ij}$.
\end{proof}

With the above claim shown, the proof for the approximate correctness of \Pasaq applies here and we can show that we can find an $\varepsilon$-optimal solution for arbitrarily small $\varepsilon$~\cite{yang2012computing}.

\section{Additional Experiment Results}
\label{app:additionalexp}
In Figure~\ref{fig:simple-optimal}, we compare the performance of the level-0 defender and the informant-aware defender when playing against an informant-aware attacker. We see that despite that their strategies are computed under the level-0 attacker assumption, the utility of the level-0 defender is only slightly lower than the utility of the informant-aware defender. For a fixed $r$, the difference in utility grows larger as the defender recruits more informants and has a higher probability of receiving a tip.

\begin{figure}[htbp]%{0.49\textwidth}
		\centering
		\includegraphics[width=0.24\textwidth]{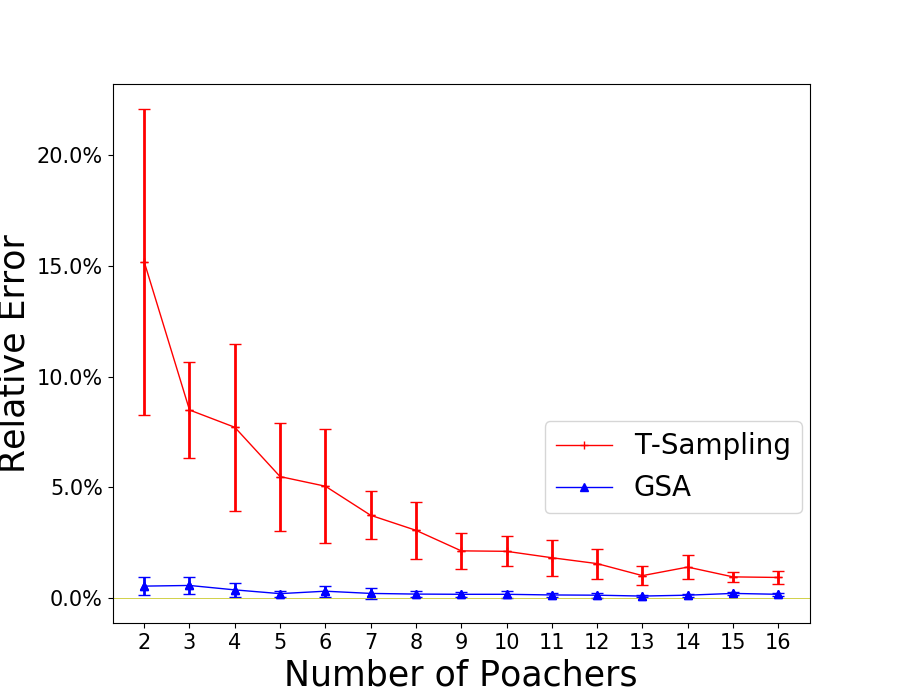}
		~
		\includegraphics[width=0.24\textwidth]{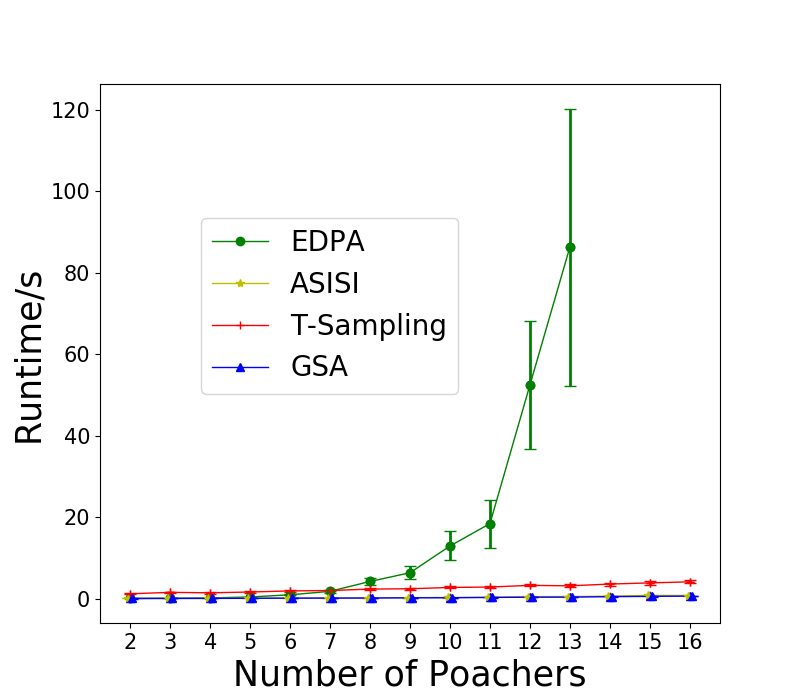}
		\caption{Runtime and Solution Quality increasing $|Y|$  with $w_{uv}=1$ for all $u,v$.}
		\label{figspec}
	\end{figure}

	\begin{figure}[htbp]%{0.49\textwidth}
		\centering
		\includegraphics[width=4.0cm]{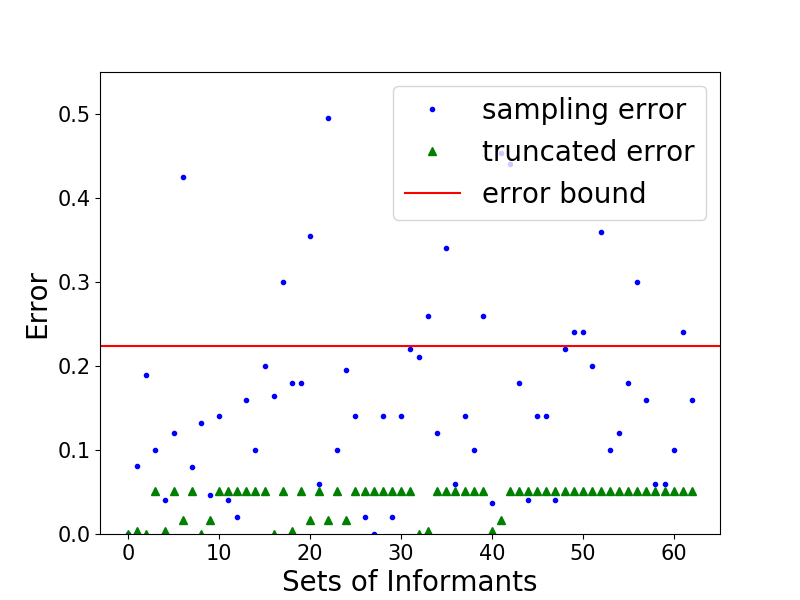}
		~
		%\includegraphics[width=5cm]{AAMAS_CE/}
		%~		
		\includegraphics[width=4.0cm]{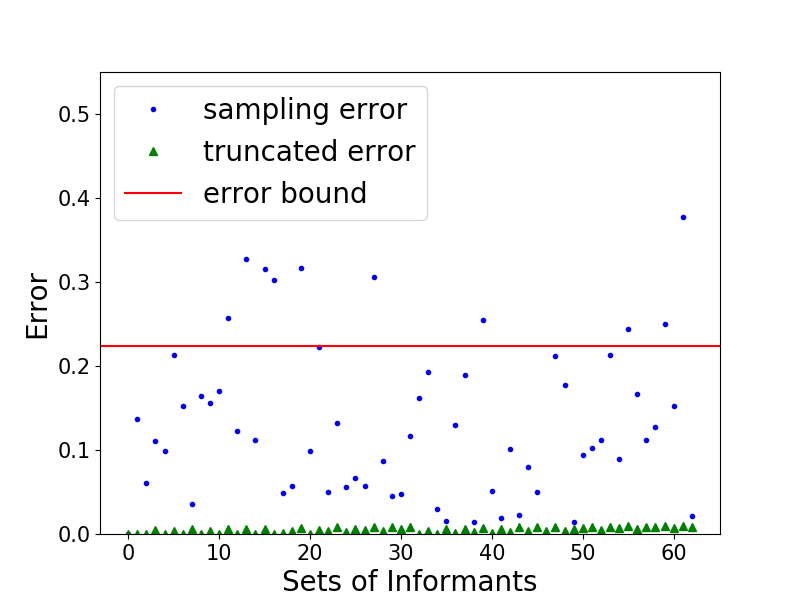}

		\caption{Error of $\DefEU$ on 2 Cases with Fixed $C=6,C'=2,|Y|=8,Q=2$.}
		\label{figerr}
	\end{figure}

We test the special case assuming SISI. We set $|X|=6, k=4, n=10, r=3$ and enumerate $|Y|$ from 2 to 16. 
%The sequences of $p_v$'s with $\sum_{p_v}<3$ are generated randomly in every instance.     
The average runtime of all algorithms including \SpeCase together with the relative error of \Heuristic and \Sampling are shown in Figure \ref{figspec}. Though \SpeCase and \Heuristic are the fastest among all, the average relative errors of \Heuristic are slightly above $0\%$. \Sampling is slightly slower than \SpeCase and \Heuristic, and the solutions are less accurate than the other two in this case.

Another experiment is a case study on 2 instances with $\sum_{v\in Y}p_v<2$ fixed $|X|=6$, $|Y|=8$, $n=6$, $r=3$, $Q=2$. We run \BF, \Ctrunc ($C=6$) and \Sampling on each instance and show the error of the estimations for all $U\subseteq X$. The results are shown in Figure \ref{figerr} and the red lines indicate the error bound given by Lemma \ref{lemmabound}. We encode the set of informants in binary, e.g., the set with code $19=(010011)_2$ represents the set $\{u_1,u_2,u_4\}.$
The first instance is constructed to show that the bound given by Lemma \ref{lemmabound} is empirically tight, i.e., the estimation of $\DefEU(U)$ by \Ctrunc could be large but still bounded. In this case, we set $p_v=1, w_{uv}=1\, \forall u,v$, $R_i^d=Q$ and $P_i^d=-10^{-3}.$ While the other instance is randomly generated. It is shown that \Sampling has larger errors with higher variances compared to \Ctrunc.

\section{Experiment Setup}
\label{app:expsetup}
To generate $G_S=(X,Y,E)$, we first fix the sets $X$ and $Y$. For each $u\in X$, we sample the degree of $u$, $d_u$, uniformly from $[|Y|]$ and then sample a uniformly random subset of $Y$ of size $d_u$. For each $(u,v) \in E$, $w_{uv}$ is drawn from $U[0,0.2]$.
For the attack probability $p_v$, in the general case, each $p_v$ is drawn from $U[0.4,1]$.
When we restrict $\sum_{v\in Y}p_v\leq C'$, we draw a vector $\mathbf{t}=(t_1,\ldots,t_{|Y|})$ from $U[0,1]^{|Y|}$ and set $p_v=\min\{1,C'\cdot\frac{t_v}{||\mathbf{t}||_1}$\}. 
For the payoff matrix, each $R_i^d$ ($R_i^a$) is drawn from $U(0,Q]$ and each  $P_i^d (P_i^a)$ is drawn from $U[-Q,0)$, where $Q$ is set to 2.
The precision parameter $\lambda$ is set to 2. 
$\DefEU_0$ and $q_i$'s are obtained by a binary search with a convex optimization as introduced in \cite{yang2012computing}.
The number of samples $\mathsf{T}$ used in \Sampling is set to 100.
In \Heuristic, \BF is used to calculate $\DefEU(U)$.

In the \texttt{QRI-MILP} algorithm, the optimal defender strategy is found with approximation parameter $K = 10$. The bi-level optimization algorithm is implemented using MATLAB R2017a. The low-level linear program is solved using the \texttt{linprog} function and the high-level optimization is solved with the \texttt{fmincon} function.
\section{Defending Against Level-$\kappa$ Attackers}
\label{app:levelk}
In the section \ref{Sec:level0}, we deal with the case with only type-0 attackers and provide algorithms to find the optimal set of informants to recruit. In this section, we show how those approaches can be easily extended to the case with level-$\kappa$ attackers.

Once given $\hat{\bx}^{\kappa-1}$, $\bq^{\kappa}$ can be easily obtained. So as $\DefEU_{\kappa}$, the defender's expected utility using $\bx_0$ against a single attack of a level-$\kappa$ attacker. To get the solution, we simply replace $\DefEU_{0},\bq^0$ with $\DefEU_{\kappa},\bq^{\kappa}$ and apply \Select or \Heuristic.
In order to calculate $\hat{\bx}^{\kappa-1}$ by definition, all that remains is to calculate $\MS(\mathbf{x}_0,\mathbf{x},\mathbf{q}^{i})$ for $i<\kappa$. The marginal probability of each target being covered can be calculated in a way similar to EDPA.

\iffalse
\section{I\quad Proof of Proposition \ref{non-decreasing}}

\propositionQRI*
Consider the two optimization problems induced by different values for $w$: $w_1, w_2$ where $w_2 > w_1$. Let $(\mathbf{x}, \mathbf{y}, \mathbf{z})$ be a solution for when $w = w_1$. Then, $(\mathbf{x}, \mathbf{y}, \frac{w_1}{w_2}\mathbf{z} + (1 - \frac{w_1}{w_2})\mathbf{x})$ is a feasible solution for when $w = w_2$ that achieves the same objective value. To see why it is feasible, observe that constraint (\ref{optln:y}) is satisfied by construction and constraint (\ref{optln:z}) is satisfied since the new value for each $z_i$ is a convex combination of the previous $x_i, z_i$, which were both in $[0, 1]$.
\fi
\end{document}